\newtheorem{theorem}{Theorem}
\newtheorem{lemma}[theorem]{Lemma}
\newtheorem{definition}[theorem]{Definition}
\newtheorem{algo}[theorem]{Algorithm}
\newcommand{\size}{\operatorname{size}}
\newcommand{\Rph}{R_{PH3}^\infty}
\newcommand{\ignore}[1]{}
\title{Parallel Online Algorithms \\ for the Bin Packing Problem}
\author{S\'andor P.~Fekete, Jonas Grosse-Holz, Phillip Keldenich, Arne Schmidt\\
	\small Department of Computer Science, TU Braunschweig, Germany.\\ \small \{s.fekete, j.grosse-holz, p.keldenich, arne.schmidt\}$@$tu-bs.de}
\date{}
\begin{document}
	\maketitle
	\begin{abstract}
	We study \emph{parallel} online algorithms:
	For some fixed integer $k$, a collective of $k$ parallel processes that perform
	online decisions on the same sequence of events forms a $k$-\emph{copy algorithm}. 
	For any given time and input sequence, 
	the overall performance is determined by the best of the $k$ individual total results. 
	Problems of this type have been considered for online makespan minimization;
	they are also related to optimization with \emph{advice} on future events, i.e.,
	a number of bits available in advance.

	We develop \textsc{Predictive Harmonic}$_3$ (PH3), a relatively simple
	family of $k$-copy algorithms for the online Bin Packing Problem, whose joint
	competitive factor converges to 1.5 for increasing $k$. In particular,
	we show that $k=6$ suffices to guarantee a factor of $1.5714$ for PH3, 
	which is better than $1.57829$, the performance of the best known 1-copy algorithm
	\textsc{Advanced Harmonic}, while $k=11$ suffices to achieve a factor of $1.5406$,
	beating the known lower bound of $1.54278$ for a single online algorithm. 
	In the context
	of online optimization with advice, our approach implies that 4 bits suffice to achieve 
	a factor better than this bound of $1.54278$, which is considerably less than the previous bound of 15 bits.
	\end{abstract}
\section{Introduction}
\newcommand{\hm}{\textsc{Harmonic}\textsubscript{M}}

When dealing with unknown future events, optimization with incomplete information typically
considers the competitive factor of an online algorithm as its performance measure;
the objective becomes to develop a single strategy that performs reasonably well against the worst case. 
This focus on just \emph{one} option is more restrictive than 
hedging strategies in a wide variety of other scientific and application fields; these
typically make use of \emph{several} parallel 
choices, thereby increasing the chance that one of them
will yield satisfactory results. Examples include scenarios from 
biology, where a large and diverse progeny increases the odds of surviving offspring;
finance and insurance, where a suitable combination of investment strategies is employed to 
balance a portfolio against extreme losses;
and engineering, where redundancy is used to protect against catastrophic failure,
either on individual components (such as parts in a machine) or on whole systems (such 
as automata in a robot swarm or spacecraft in a group of satellites), where it suffices
that just one machine delivers a good outcome.

In this paper, we consider such \emph{parallel} online strategies: Instead of making
a single sequence of decisions, we consider $k$ parallel processes for some fixed integer
$k$, which we call a $k$\emph{-copy algorithm}; the objective is to make the best of 
these $k$ outcomes as good as possible, even in the worst case.
We demonstrate the potential of this approach for the well-studied Bin Packing Problem,
for which it is known that no single deterministic online algorithm can achieve a competitive factor
below $1.5401$.

\subsection{Our Results}
We define a family of $k$-copy algorithms for the online Bin Packing Problem,
called \textsc{Predictive Harmonic}$_3$ (PH3),
whose asymptotic competitive ratio converges to $1.5$ for large $k$.
We show that $k=6$ suffices to guarantee a factor of $1.5714$, 
which is better than $1.57829$, the performance of the best known 1-copy algorithm
\textsc{Adanved Harmonic}~\cite{balogh2018algorithm}. Moreover, $k=11$ suffices to achieve a competitive ratio of $1.5406$
beating the known lower bound of $1.54278$ for a 1-copy algorithm~\cite{balogh2018bound}.
In the context of online optimization with advice, our approach implies that 4 bits 
suffice to achieve less than $1.5401$, which is considerably less than the previous bound of 16 bits
of \textsc{RedBlue} by Angelopoulos et al.~\cite{Angelopoulos2015}; in fact, 
for $k=16$ (corresponding to four bits of advice) 
PH3 achieves a ratio of $1.5305$, compared to $3.3750$ for \textsc{RedBlue}, while 
$k=65,536$ (corresponding to 16 bits of advice) yields a factor of $1.5001$ for PH3,
but $1.5293$ for \textsc{RedBlue}.

\subsection{Related Work on Online Bin Packing}
There is a wide range of online algorithms for bin packing.
The Next Fit algorithm~\cite{Csirik1998} achieves a competitive ratio of 2, whereas ``Almost Any Fit'' algorithms~\cite{JOHNSON1974272} like First Fit or Best Fit achieve competitive ratios of $1.7$.

\ignore{An algorithm considered very early is the \textsc{Next Fit} (NF) algorithm.
NF packs each item into the so-called \emph{active} bin.
Once an item does no longer fit into the active bin, the bin is closed and a new empty bin is used as active bin.
NF achieves an asymptotic competitive ratio of $R_{NF}^\infty = 2$ \cite{Csirik1998}.

The family of \textsc{Almost Any Fit} (AAF) algorithms, as introduced by Johnson \cite{JOHNSON1974272}, is characterized by only packing an item into the bin with the lowest level, if there is more than one bin with that level or that bin is the only one that has enough room.
The most important algorithms in this class are the \textsc{First Fit} (FF) and the \textsc{Best Fit} (BF) algorithms.
FF gives each used bin an index.
Each item is packed in the bin with the lowest index in which it fits.
BF packs each item in the bin with the highest level in which the item fits.
Each AAF algorithm $A$ achieves an asymptotic competitive ratio $R_A^\infty = 1.7$.
}

An important online bin packing algorithm is \hm{}, introduced by Lee and Lee \cite{Lee:1985:SOB:3828.3833},
which achieves a competitive ratio of less than $1.692$ for $M \to \infty$.
Based on \hm{}, \textsc{Son Of Harmonic} by Heydrich and van Stee~\cite{DBLP:journals/corr/HeydrichS15} achieves a competitive ratio of $1.5816$.
The currently best known algorithm is \textsc{Advanced Harmonic}, which achieves a competitve ratio of $1.57829$~\cite{balogh2018algorithm}.

For lower bounds, Yao~\cite{Yao:1980:NAB:322186.322187} established a value of
$3/2$ that was later improved to $1.536$, independently by Brown~\cite{brown1979} and by
Liang~\cite{LIANG198076}.
Using a generalization of their methods,
van Vliet~\cite{VANVLIET1992277} proved a lower bound of $1.5401$. 
Balogh et al.~\cite{balogh2018bound} improved the lower bound to $1.54278$.

\subsection{Related Work on Online Bin Packing with Advice}
In the context of online algorithms with advice, Boyar et al.\ \cite{Boyar2016} showed that an online algorithm with $n\lceil \log(OPT(I)) \rceil$ bits of advice is sufficient and that at least $(n - 2 OPT(I))\cdot \log (OPT(I))$ bits of advice are necessary to achieve optimality.
In the same paper, they presented an online bin packing algorithm, namely \textsc{ReserveCritical}, with $O(\log(n)) + o(\log(n))$ bits of advice that is $1.5$-competitive and an algorithm with $2n+o(n)$ bits of advice that is $\frac 4 3$-competitive.
Zhao and Shen~\cite{zhao2014advice} developed an algorithm using $3n+o(n)$ bits of advice achieving a competitive ratio of $\frac 5 4\textsc{opt} + 2$.
Renault et al.~\cite{renault2015online} developed an $(1+\varepsilon)$-competitive algorithm using $O(\frac 1 \varepsilon \log \frac 1 \varepsilon)$ bits of advice per request.

Based on \textsc{ReserveCritical}, Angelopoulos et al.~\cite{Angelopoulos2015} developed the algorithm \textsc{RedBlue} with constant advice that is $1.5$-competitive.
Their second algorithm achieves a competitive ratio of $1.47012+\varepsilon$ with finite advice that is exponentially dependent of $\varepsilon$.
However, to beat the competitive ratio of $1.5$ already an enormous amount of advice is needed, which makes the algorithm impractical.

In terms of lower bounds, Boyar et al.\ \cite{Boyar2016} proved that no competitive ratio better than $9/8$ can be reached by any algorithm that uses sub-linear advice.
Angelopoulos et al.\ \cite{Angelopoulos2015} improved this bound to $7/6$.

\subsection{Related Work on Parallel Online Algorithms}
Parallel algorithms have already been considered in the field of online algorithms with advice.
Boyar et al.\ \cite{Boyar2014} presented an algorithm for the online list
update problem, making use of $2$ bits of advice to choose one out of three
algorithms.  This algorithm achieves a competitive ratio of $5/3$, beating the
lower bound for conventional online algorithms of $2$.
A practical application of this algorithm was shown by Kamali and Ortiz \cite{6824445}, who applied it in the Burrows-Wheeler transform compression.
More work on parallel online algorithms include parallel scheduling~\cite{albers2017online}, finding independent sets~\cite{halldorsson2002online} and the ``multiple-cow'' version of the linear  search problem~\cite{lopez2004line}.

While online algorithms with advice mostly focus on the amount of advice to
allow classification of online algorithms and problems, $k$-copy online
algorithms focus on small finite values for $k$ and thus small finite amounts
of advice, with more emphasis on practical application.  The perspective on
different algorithms running in parallel instead of abstract arbitrary
information facilitates finer optimization in some cases. 

Also, when considering online algorithms with advice, the number of algorithm can only be doubled by increasing the amount of advice by one bit.
The perspective of k-copy algorithms allows arbitrary $k \in \mathbb{N}$ for the number of algorithms.

\section{Preliminaries}
\label{c:preliminaries}

\subsection{\boldmath{$k$}-Copy Online Algorithms}
In this paper, we consider $k$ online algorithms $A_1,\dots, A_k$, each of them processing the same input list $I$ in parallel. 
We call the set $\mathcal{A}:=\{A_1,\dots,A_k\}$ a \emph{$k$-copy online algorithm}.

For an input list $I$ and an online algorithm $A$, let $A(I)$ denote the number of bins used by $A$ and $\textsc{opt}(I)$ denote the number of bins used in an optimal offline solution.
The absolute competitive ratio $R_{\mathcal{A}}$ for a $k$-copy online algorithm $\mathcal A$ is defined as
\begin{align*}
R_\mathcal A = \sup_I \left\{ \frac{\min_{A \in \mathcal{A}} A(I)} { \textsc{opt}(I)} \right \}.
\end{align*}
The asymptotic competitive ratio $R_\mathcal{A}^\infty$ for algorithm $\mathcal A$ is defined as
\begin{align*}
R_\mathcal A^\infty = \lim_{n \to \infty} \sup_I \left\{\frac{\min_{A \in \mathcal{A}} A(I)} { \textsc{opt}(I)} \ \middle |\ \textsc{opt}(I) = n \right\}
\end{align*}

As already stated by Boyar et al.~\cite{boyar2017online}, any $k$-copy online algorithm can be converted into an online algorithm with advice, and vice versa.

\begin{lemma}\label{l:kcopy_advice}
	
	Any $k$-copy online algorithm can be converted into an online algorithm with $l=\lceil \log_2(k) \rceil$ bits of advice that achieves the same competitive ratio.
	Conversely, any online algorithm with $l \in \mathbb{N}$ bits of advice can be converted into a $k$-copy online algorithm without advice with $k=2^l$ that achieves the same competitive ratio.
\end{lemma}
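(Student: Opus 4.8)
The plan is to prove both directions by a direct simulation argument, exploiting that the advice in the first direction and the index of the selected copy in the second direction carry exactly the same piece of information: the identity of one member of a finite family of deterministic online algorithms. In particular, no bins are ever wasted in either translation, so the equivalence will be exact and will hold simultaneously for the absolute ratio $R_{\mathcal A}$ and the asymptotic ratio $R_{\mathcal A}^\infty$.

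\emph{From $k$-copy to advice.} Given a $k$-copy online algorithm $\mathcal A = \{A_1,\dots,A_k\}$, I would first fix, for every input list $I$, an index $i^*(I) \in \{1,\dots,k\}$ attaining the minimum $\min_{A\in\mathcal A} A(I)$. Since $l = \lceil \log_2 k\rceil$ satisfies $2^l \ge k$, the $2^l$ possible advice strings can be put in bijection with a superset of $\{1,\dots,k\}$; the oracle, which is allowed to inspect the whole of $I$, writes the string encoding $i^*(I)$, and the advice algorithm reads those $l$ bits and then runs $A_{i^*(I)}$ verbatim on the online sequence $I$. This is a legitimate online algorithm with $l$ bits of advice, and on every $I$ it opens exactly $A_{i^*(I)}(I) = \min_{A\in\mathcal A}A(I)$ bins. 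Hence the quantities inside the suprema defining $R_{\mathcal A}$ and $R_{\mathcal A}^\infty$ agree term by term with those of the advice algorithm, so the two competitive ratios coincide.

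\emph{From advice to $k$-copy.} Conversely, given an online algorithm $B$ that uses $l$ bits of advice, I would, for each advice string $s\in\{0,1\}^l$, let $B_s$ denote the advice-free deterministic online algorithm obtained by hard-wiring the advice tape to $s$. Setting $\mathcal A := \{B_s : s\in\{0,1\}^l\}$ yields a $k$-copy online algorithm with $k = 2^l$. By the definition of the competitive ratio in the advice model, the number of bins used by $B$ on input $I$ equals $\min_{s\in\{0,1\}^l} B_s(I)$, which is precisely $\min_{A\in\mathcal A} A(I)$; consequently the suprema over all $I$ (respectively over all $I$ with $\textsc{opt}(I)=n$, and then the limit) are unchanged, giving $R_{\mathcal A} = R_B$ and $R_{\mathcal A}^\infty = R_B^\infty$.

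The only point that needs care — and the closest thing to an obstacle — is the bookkeeping around the two definitions of ``cost'': one must confirm that the oracle in the advice model may depend on the entire instance $I$ (it may), and that $2^l \ge k$ guarantees every copy is addressable by some advice string. The asymptotic statement then comes for free, because in both directions the two models produce \emph{identical} bin counts on each individual instance, not merely identical worst-case behaviour; I would also note explicitly that the constructions preserve determinism and introduce no additional bins, matching the remark that this equivalence was already observed by Boyar et al.
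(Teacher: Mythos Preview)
Your proposal is correct and follows essentially the same simulation argument as the paper: encode the index of the best-performing copy as the advice string in one direction, and hard-wire each possible advice string into a separate copy in the other. Your write-up is in fact slightly more careful than the paper's (explicitly noting per-instance equality of bin counts and hence preservation of both $R_{\mathcal A}$ and $R_{\mathcal A}^\infty$), but the underlying idea is identical.
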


\begin{proof}
	Let $\mathcal{A} = \{ A_1, A_2, \dots, A_k \}$ be a $k$-copy algorithm.
	Construct the online algorithm $A'$ that gets a value $i \in \{ 1, 2, \dots, k \}$ as advice, specifying the index $i$ of the algorithm $A_i \in \mathcal{A}$ that performs best on the given input sequence.
	The value $i$ can be encoded using $\lceil \log_2(k) \rceil$ bits.
	$A'$ then behaves like $A_i$ and thus achieves the same competitive ratio as $\mathcal{A}$.
	
	Let $A$ be an online algorithm that gets $l \in \mathbb{N}$ bits of advice.
	Construct the online $k$-copy algorithm $\mathcal{A}'$ with $k = 2^l$ algorithms $A_i, i \in \{ 1, 2, \dots, k \}$.
	For each $i \in \{ 1, 2, \dots, k \}$, the algorithm $A_i$ behaves like $A$ given $i$ encoded in binary as advice.
	As the values $i \in \{ 1, 2, \dots, k \}$ cover every possible configuration of the advice bits, for any advice given to $A$, there is an algorithm $A_i \in \mathcal{A}'$, that assumes this advice.
	Accordingly, there is an algorithm $A_i \in \mathcal{A}'$, that performs as well as $A$, i.e., the best algorithm $A_j \in \mathcal{A}$ that performs at least as well as $A$.
	Thus, $\mathcal{A}'$ performs at least as well as $A$.
\end{proof}

\subsection{Bin Packing}
In the online version of bin packing, we are given a list of items $I:=\langle a_1,\dots, a_n\rangle$ with $a_i \in (0,1]$ for $i\in \{1,\dots,n\}$.
These items must be packed by an algorithm, one at a time,  without any information on subsequent items and without the possibility to change previous decisions. 
The goal is to pack all items into a minimum number of bins with unit capacity.

\begin{definition}[Item size]
	
Let $S=\left[0,\frac{1}{3}\right]$,
$M=\left(\frac{1}{3},\frac{1}{2}\right]$,
$L=\left(\frac{1}{2},\frac{2}{3}\right)$ and 
$XL=\left[\frac{2}{3},1\right]$.
We call items in $S$ \emph{small}, items in $M$ \emph{medium}, items in $L$ \emph{large} and items in $XL$ \emph{extra large}.
For a list $I = \langle a_1, a_2, \dots a_n \rangle$, the set of items $Set(I) \cap XL$ is noted as $I_{XL}$ for improved readability. The subsets $I_L$, $I_M$ and $I_S$ are used analogously.
\end{definition}

\begin{definition}[Size function]
	Let $S$ be a set (or list) of items.
	Then, $\size(S) := \sum_{i \in S}i$.
	For a bin $b$, we refer to $\size(b)$ as the size of the bin, i.e., the sum of items already packed in $b$.

\end{definition}

\begin{definition}[Sub-bins]

Given a bin $b$, it can be split into two parts $b_{1}$ and $b_{2}$, such that the sum of their capacities is equal to the capacity of $b$.
We refer to $b_{1}$ and $b_{2}$ as sub-bins.
We call a sub-bin with capacity $C$ a $C$-sub-bin.

\end{definition}

As sub-bins are not packed with an amount larger than their capacity, each sub-bin can be packed independently from the other.

\section{\textsc{Predictive Harmonic$_3$}}

Now we introduce the algorithm \textsc{Predictive Harmonic$_3$} (PH3).
Although developed independently, it bears many similarities to \textsc{ReserveCritical} and \textsc{RedBlue}.
PH3 uses the same classifications as the other two algorithms and tries to pack all large items with small items, such that the corresponding bins are packed to a level of at least $2/3$.
However, in contrast to \textsc{RedBlue}, 
the information needed by PH3 does not depend on the result of \textsc{ReserveCritical}, 
but only on the number and size of certain item types, and can be calculated in linear time.

The main idea of PH3 is to guess the ratio of how many small items must be packed with large items to obtain a packing density of $2/ 3$.
Having multiple instances of PH3, every instance can guess a different ratio to get close to a competitive ratio of $1.5$.

\begin{algo}{\textsc{Predictive Harmonic$_3$}}
\normalfont

Given a list $I= \langle a_1,a_2,\dots,a_n \rangle$ of items $a_i \in (0,1], i \in {1,\dots,n}$, and a ratio $r_L \in [0,1]$,
the algorithm packs the items as follows:
\begin{itemize}
	\item Extra large items are packed into individual bins. 
	These bins are called XL-bins, the set of all XL-bins is called $B_{XL}$.
	
	\item Large items are packed into individual bins.
	These bins are called L-bins, the set of all L-bins is called $B_L$.
	Furthermore, we split each L-bin into a $\frac 2 3$-sub-bin (for large items) and a $\frac 1 3$-sub-bin (for small items).
	
	\item Medium items are packed into separate bins together with other medium items (note that at most two of them fit into one bin).
	These bins are called M-bins, the set of all M-bins is called $B_M$.
	
	\item Small items are packed into a $\frac 1 3$-sub-bin of L-bins in a next fit manner, if the size of small items packed into L-bins is smaller than $r_L$ times the total size of small items packed so far; otherwise we pack the small item into S-bins.
\end{itemize}

\end{algo}

\subsection{Competitive Ratio}
Before we proceed to prove the asymptotic competitive ratio, we recall some lower bounds for an optimal offline solution.

\begin{lemma}{(Lower bounds on OPT)}
	\label{l:lbound_OPT}
	For any input sequence $I$ in online bin packing, the following lower bounds on $OPT$ hold:
	\begin{enumerate}
		\item $OPT \geq |I_{XL}| + |I_L| \geq |I_L|$
		\item $\displaystyle OPT \geq |I_{XL}| + \frac{|I_M|+|I_L|}{2}$
		\item $OPT \geq \size(I)$
	\end{enumerate}
\end{lemma}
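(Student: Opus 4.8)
The plan is to establish each of the three inequalities separately by a short packing-feasibility or volume argument that only uses the interval endpoints defining $S$, $M$, $L$, and $XL$.

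For inequality~1, I would show that in any feasible packing no bin contains two items of $I_{XL}\cup I_L$: two extra large items already sum to at least $4/3>1$, an extra large and a large item sum to more than $2/3+1/2>1$, and two large items sum to more than $1/2+1/2=1$. Hence every bin holds at most one item of the disjoint union $I_{XL}\cup I_L$, which gives $OPT\ge |I_{XL}|+|I_L|$; the second half, $|I_{XL}|+|I_L|\ge|I_L|$, is immediate.

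For inequality~2, I would first extend the case check to rule out an extra large item sharing a bin with a medium item: since any medium item exceeds $1/3$, an extra large and a medium item sum to more than $2/3+1/3=1$. Thus every bin that contains an extra large item contains no item of $I_L\cup I_M$ at all. On the remaining bins, a large item has size more than $1/2$ and a medium item more than $1/3$, so three items of $I_L\cup I_M$ in one bin would exceed $1$, and therefore at most two such items fit per bin. Partitioning the bins of an optimal packing into those that do and those that do not contain an extra large item then yields $OPT\ge |I_{XL}|+\lceil(|I_L|+|I_M|)/2\rceil\ge |I_{XL}|+\frac{|I_L|+|I_M|}{2}$.

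Inequality~3 is the standard volume bound: the items packed in any single bin of an optimal solution have total size at most the bin's unit capacity, and summing over all $OPT$ bins gives $OPT\ge\size(I)$. None of these steps is genuinely hard; the only part demanding a little attention is the exhaustive (but finite) check underlying inequality~2 — that no bin can hold an extra large item together with a large or medium one, and that no bin holds three items from $L\cup M$ — while everything else is a one-line observation.
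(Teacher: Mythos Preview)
Your proposal is correct and follows essentially the same approach as the paper's proof: the paper argues inequality~1 by noting any two items in $XL\cup L$ sum to more than~$1$, inequality~2 by observing that no $M\cup L$ item fits with an $XL$ item and at most two $M\cup L$ items fit per bin (each exceeding $1/3$), and inequality~3 by the same volume bound. You simply spell out the case distinctions more explicitly than the paper does, but the underlying arguments are identical.
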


\begin{proof}\
	
	\begin{enumerate}
		\item As any two items $a,a' \in XL \cup L$ with $a \neq a'$ fulfill $a+a' > 1$, OPT has to put each of these items in a different bin. Thus, $OPT \geq |I_{XL}|+|I_L|$.
		\item First, note that no item $a \in M \cup L$ can be packed with an item $a' \in XL$.
		Also, no more than two items $a, a' \in M \cup L$ can be packed in the same bin, as each of these items is greater than $1/3$.
		Thus, 
		\begin{align*}
		OPT & \geq |I_{XL}|+\frac{|I_M|+|I_L|}{2}
		\end{align*}
		\item As OPT has to pack the items into unit-sized bins, the number of bins must be at least the items total size $\sum_{a \in I}a = \size(I)$.\qedhere 
	\end{enumerate}
\end{proof}

Using this bounds and performing a case analysis, we can prove the following theorem. 

\begin{theorem}
	\label{t:cr}
	Let $r_L^* = min\left\{\frac{|I_L|}{6\;\size(I_S)}, 1 \right\}$\footnote{The intuition of this value is that at least $1/2$ of each $1/3$-sub-bin must be filled to guarantee a packing density of $2/3$. Therefore, for $|I_L|$ bins, we have to fill up a total capacity of $\frac {|I_L|}{6}$ with small items.}
	and $\delta = r_L - r_L^*$.
	\ignore{
		\begin{align*}
		r_L^* = min\left\{\frac{|I_L|}{6\;\size(I_S)}, 1 \right\} \text{ and } \delta = r_L - r_L^*
		\end{align*}
	}
	PH3 achieves the asymptotic competitive ratio
	\begin{align*}
	R_{PH3}^\infty
	\leq 
	\begin{cases}
	\dfrac{3}{2} + min\left\{\dfrac{1}{4r_L^*}, \dfrac{3}{6 r_L^* + 2}\right\} (-\delta) &\text{ for } \delta \leq 0
	\vspace{2mm}\\
	\dfrac{3}{2} + min\left\{\dfrac{3}{4r_L^*}, \dfrac{9}{6 r_L^* + 2}\right\} \delta &\text{ for } \delta \geq 0.\\
	\end{cases}
	\end{align*}
\end{theorem}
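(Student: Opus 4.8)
The plan is to bound $PH3(I)$ in terms of the three lower bounds on $OPT$ from Lemma~\ref{l:lbound_OPT}, then divide. Let me set up the accounting. The bins PH3 produces are of four kinds: the $|I_{XL}|$ many XL-bins, the $|I_L|$ many L-bins, the $\lceil |I_M|/2 \rceil$ many M-bins, and some number $s$ of S-bins. The XL-, L-, and M-bin counts are already essentially optimal by parts 1 and 2 of Lemma~\ref{l:lbound_OPT}: together they contribute at most $|I_{XL}| + |I_L| + |I_M|/2 + 1 \le OPT + |I_L|/2 + 1$ (being a bit careful, I'd rather write $|I_{XL}| + |I_L| + \lceil|I_M|/2\rceil \le \frac{3}{2}\bigl(|I_{XL}| + \frac{|I_M|+|I_L|}{2}\bigr) + O(1) \le \frac32 OPT + O(1)$, since $|I_{XL}|+|I_L|+|I_M|/2 = \frac32|I_{XL}| + \frac34|I_M| + |I_L| \le \frac32\bigl(|I_{XL}| + \frac{|I_M|+|I_L|}{2}\bigr)$ iff $|I_L| \le \frac32|I_L|$, which holds). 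So the only term that can push the ratio above $3/2$ is the number $s$ of S-bins, and the whole theorem is really a statement about how many small items get "left over" after the $1/3$-sub-bins of the L-bins have absorbed their quota.

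Next I would analyze the S-bins. Each S-bin except possibly the last is filled in a next-fit manner to level $> 2/3$ (an item of size $\le 1/3$ that doesn't fit into a bin of level $\le 2/3$ would have to... wait — S-bins have capacity $1$, and small items have size $\le 1/3$, so next-fit on them gives level $> 1 - 1/3 = 2/3$ in every closed S-bin). Hence $s \le \frac{3}{2}\,\size(\text{small items in S-bins}) + 1$. Now by the algorithm's rule, the size of small items routed to L-bins stays just under $r_L\cdot\size(I_S)$, so the size of small items in S-bins is at least $(1-r_L)\size(I_S) - 1/3$; but what I actually need is an \emph{upper} bound on that quantity, which is simply $\size(I_S)$ minus whatever the L-bins did absorb. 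I also need to control $\size(\text{small in L-bins})$ from below: each $1/3$-sub-bin that receives small items and is then "full" in the next-fit sense holds $> 1/3 - 1/3 = 0$... — more carefully, a $1/3$-sub-bin gets small items until the next one wouldn't fit, i.e. to level $> 1/3 - 1/3$; the useful statement is that across all $|I_L|$ L-bins, either the small-item size in L-bins reaches $r_L\size(I_S)$ (the "quota met" regime), or every L-bin's $1/3$-sub-bin is filled to level $> 1/6$ except possibly not — I'd need the footnote's $1/2$-of-each-sub-bin idea: a $1/3$-sub-bin holds small items of total size $> 1/6$ unless the very next small item (of size $\le 1/6$?) — hmm, this is the subtle point.

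This is where the case split on $\delta = r_L - r_L^*$ comes in, and it is the main obstacle. In the regime $\delta \le 0$ (PH3 is told to send \emph{fewer} small items to L-bins than the ideal $r_L^*$), the L-bins are under-filled, so $OPT$ must be paying for the deficit via its size lower bound (part 3) \emph{or} via $|I_L|$ (part 1), whichever is tighter — this is the source of the $\min\{\frac{1}{4r_L^*}, \frac{3}{6r_L^*+2}\}$: the first term comes from comparing against $OPT \ge \size(I) \ge \size(I_S) + \size(I_L) \ge \size(I_S)$ roughly, and the second from a convex combination with $OPT \ge |I_L|$. In the regime $\delta \ge 0$ (too many small items crammed into L-bins), the S-bins are under-supplied so $s$ is small, but now $OPT$'s size bound is the binding constraint differently, giving the factor-3-larger slopes $\min\{\frac{3}{4r_L^*}, \frac{9}{6r_L^*+2}\}$. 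Concretely I would: (i) write $PH3(I) = |I_{XL}| + |I_L| + \lceil|I_M|/2\rceil + s$; (ii) in each regime bound $s$ linearly in $\size(I_S)$, $|I_L|$, and $\delta$; (iii) substitute $r_L^* = \min\{|I_L|/(6\size(I_S)),1\}$ to re-express $\size(I_S)$ as $|I_L|/(6r_L^*)$ when $r_L^* < 1$ (the case $r_L^* = 1$ being easy since then $\size(I_S)$ is tiny relative to $|I_L|$); (iv) take the worst case over the two available lower bounds $OPT \ge |I_L|$ and $OPT \ge \size(I) \ge \size(I_S) + \frac12|I_L| + \cdots$, which is exactly what produces the $\min$ of two fractions; and (v) let $n\to\infty$ to kill the additive $O(1)$ terms. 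The genuinely delicate part is step (ii): pinning down, in the $\delta \ge 0$ case, that the excess small items packed into L-bins beyond the quota is controlled by $\delta\cdot\size(I_S)$ up to lower-order terms, and in the $\delta \le 0$ case that the shortfall in L-bin filling forces either many extra S-bins or (via the next-fit level argument on the $1/3$-sub-bins) a correspondingly large $\size(I_S)$ — handling the boundary effects of next-fit (the one unfilled bin per class, items straddling sub-bin capacities) cleanly enough that everything collapses to the clean piecewise-linear bound stated.
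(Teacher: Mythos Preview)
Your decomposition in step~(i), writing $PH3(I) = |I_{XL}| + |I_L| + \lceil |I_M|/2\rceil + s$, contains the central error: the number of L-bins is \emph{not} $|I_L|$. In PH3 an L-bin is opened whenever either a new large item arrives \emph{or} the next-fit packing of small items into the $\tfrac13$-sub-bins overflows the current L-bin. Hence $|B_L| = \max\{|B_{LL}|,|B_{LS}|\}$ where $|B_{LL}|=|I_L|$ but $|B_{LS}|$ can be strictly larger. Your claim that ``the only term that can push the ratio above $3/2$ is the number $s$ of S-bins'' is therefore false, and this is precisely why your account of the $\delta\ge 0$ regime goes astray: you say the S-bins are under-supplied so $s$ is small, which is true, but you never identify where the excess cost actually lives. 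It lives in the extra L-bins opened for small items beyond the $|I_L|$ bins needed for large items.

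This also explains the factor-$3$ asymmetry in the slopes that you noted but could not derive. A unit of small-item size misrouted to S-bins (the $\delta\le 0$ regime) costs roughly $\tfrac32$ extra S-bins, since S-bins fill to level $>\tfrac23$. A unit of small-item size over-routed to L-bins (the $\delta\ge 0$ regime) is packed by next-fit into $\tfrac13$-sub-bins, so it costs up to $6$ extra L-bins; netting against the $\tfrac32$ S-bins saved gives $\tfrac92$. The paper's proof makes this explicit by bounding $|B_{LS}| \le 2\,OPT'+1 \le 6\,\size(I_S')+1 < |I_L| + 6\delta\,\size(I_S) + 3$, then writing $|B_S|+|B_L| < \tfrac32(\size(I_S)+\size(I_L)) + \Delta$ with $\Delta = \max\{-\tfrac32\delta,\tfrac92\delta\}\cdot\size(I_S)+O(1)$. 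The two bounds $\size(I_S)\le OPT/(6r_L^*)$ (from $OPT\ge|I_L|$) and $\size(I_S)\le OPT/(3r_L^*+1)$ (from $OPT\ge\size(I_S)+\size(I_L)$) then produce the $\min$. Your steps (iii)--(v) are on the right track once (i)--(ii) are repaired, but without the $|B_{LS}|$ term the argument cannot close.
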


\begin{proof}
	First, consider $r_L^*=\frac{|I_L|}{6 \; \size(I_S)}$.
	
	\begin{align*}
	PH_3 &= |B_{XL}| + |B_M| + |B_S| + |B_L|
	\end{align*}
	
	As each extra large item $a \in I_XL$ is packed into a separate XL-bin and each such item satisfies $a \geq 2/3$, we get
	\begin{equation*}
	|B_{XL}| = |I_{XL}| \leq \frac{3}{2} \; \size(I_{XL}).
	\end{equation*}
	
	Two medium items are packed into each M-bin, except for the last one if the number of M-items is odd.
	Each M-item $a \in I_M$ fulfills $a > 1/3$, thus
	\begin{equation*}
	|B_M|=\left\lceil\frac{|I_M|}{2}\right\rceil \leq \frac{|I_M|+1}{2} < \frac{3}{2} \size(I_M) + \frac{1}{2}.
	\end{equation*}
	
	PH3 will only open a new S-bin, if the current item $a \in I_S$ is supposed to be packed into an S-bin and does not fit into the currently open bin.
	As $\forall a \in I_S : i \leq 1/3$, any S-bin $b \in B_S$ except the last one has to fulfill $\size(b) > 2/3$, otherwise another item $a \in I_S$ would fit.
	A fraction of $(1 - r_L)$ of the size of small items plus at most one item is packed into S-bins.
	If a bin $b \in B_S$ with $\size(b) < 2/3$ exists, this additional item can be packed there, otherwise it is packed into an additional bin.
	In either case, there is at most one bin $b \in B_S$ with $\size(b) < 2/3$.
	\begin{align*}
	|B_S| & \leq \frac{3}{2} \; (1-r_L) \; \size(I_S) + 1\\
	& =    \frac{3}{2} (1-\delta)\; \size(I_S) - \frac{3}{2} \; r_L^* \size(I_S) + 1\\
	& =    \frac{3}{2} (1-\delta)\; \size(I_S) - \frac{3}{2} \frac{|I_L|}{6 \; \size(I_S)}\; \size(I_S) + 1\\
	& =    \frac{3}{2} (1-\delta)\; \size(I_S) - \frac{1}{4}\; |I_L| + 1
	\end{align*}
	
	For $B_L$, consider the subsets $B_{LL} = \{ b \in B_L |\, b \text{ contains a large item}\}$ and $B_{LS} = \{ b \in B_L |\, b \text{ contains a small item}\}$.
	As both small and large items are packed into L-bins by increasing index, either $B_{LS} \subseteq B_{LL}=B_L$ or $B_{LL} \subseteq B_{LS}=B_L$ holds.
	Thus, $|B_L| = max\{|B_{LL}|, |B_{LS}|\}$.
	
	No two large items are packed into the same bin, so $|B_{LL}|=|I_L|$.
	
	Let $I_S'$ the set of small items packed into L-bins.
	As small items are smaller than $1/3$ and are only packed into L-bins if less than a part $r_L$ of the small items packed so far have been packed into L-bins, $\size(I_S') < r_L \; \size(I_S) + 1/3$.
	The set $I_S'$ is packed into the $1/3$-sub-bins by a NF-algorithm.
	For an optimal packing $OPT'$ of small items in these sub-bins, NF$\leq 2 \; OPT' + 1$.
	Each sub-bin has capacity $1/3$, so OPT needs at least $3 \size(I_S')$ bins to pack all items.
	\begin{align*}
	|B_{LS}| & \leq 2 \; OPT' +1\\
	& \leq 6 \; \size(I_S') + 1\\
	& < 6 \; r_L \; \size(I_S) + 3\\
	& = 6 \; (r_L^* + \delta) \size(I_S) + 3\\
	& = 6 \; \left(\frac{|I_L|}{6 \; \size(I_S)} + \delta \right) \size(I_S) + 3\\
	& = |I_L| + 6 \; \delta \; \size(I_S) +3
	\end{align*}
	
	The maximum of the bounds on $|B_{LL}|$ and $|B_{LS}$ is an upper bound on $|B_L|$.
	
	\begin{align*}
	|B_L| & \leq max\{|I_L|, |I_L| + 6 \; \delta \; \size(I_S) +3\}\\
	& = |I_L| + max\{0, 6 \; \delta \; \size(I_S) + 3\}
	\end{align*}
	
	Combining the bounds for $|B_S|$ and $|B_L|$ yields
	
	\begin{align*}
	|B_S| + |B_L|  & \leq 
	\underbrace{\frac{3}{2} (1-\delta)\; \size(I_S) - \frac{1}{4}\; |I_L| + 1}_{\geq |B_S|}
	+ \underbrace{|I_L| + max\{0, 6 \; \delta \; \size(I_S) + 3\}\vphantom{\frac{1}{4}}}_{\geq |B_L|}\\
	& =  \frac{3}{2} (1-\delta)\; \size(I_S) + \frac{3}{4}\; |I_L| + 1 + max\{0, 6 \; \delta \; \size(I_S) + 3\}
	\end{align*}
	
	Each large item $a \in I_L$ is greater than $1/2$. Therefore $|I_L| < 2 \; \size(I_L)$.
	\begin{align*}
	|B_S| + |B_L|  & <  \frac{3}{2} (\size(I_S) + \size(I_L)) -\frac{3}{2} \delta \size(I_S) + 1 + max\{0, 6 \; \delta \; \size(I_S) + 3\}
	\end{align*}
	
	Let
	\begin{align*}
	\Delta        & = \frac{3}{2} + max\{0 , 6\; \delta \; \size(I_S) + 3\} - \frac{3}{2} \delta \; \size(I_S)\\
	& = max\left\{\frac{3}{2} - \frac{3}{2} \delta \; \size(I_S), \frac{9}{2} +  \frac{9}{2} \delta \; \size(I_S)\right\}\\
	\Rightarrow |B_S| + |B_L| & < \frac{3}{2} (\size(I_S) + \size(I_L)) + \Delta -\frac{1}{2}
	\end{align*}
	Summing up,
	\begin{align*}
	PH_3
	& = |B_{XL}| + |B_M| + |B_S| + |B_L|\\
	& < \underbrace{\frac{3}{2} \; \size(I_{XL})}_{\geq |B_{XL}|}
	+ \underbrace{\frac{3}{2} \size(I_M)+\frac{1}{2}}_{> |B_M|}
	+ \underbrace{\frac{3}{2} (\size(I_S) + \size(I_L)) + \Delta - \frac{1}{2}}_{\geq |B_S| + |B_L|}\\
	& = \frac{3}{2} (\size(I_{XL}) + \size(I_M) + \size(I_S) + \size(I_L)) + \Delta\\
	& = \frac{3}{2} \size(I) + \Delta.
	\end{align*}
	
	Because of Lemma~\ref{l:lbound_OPT}, $OPT \geq \size(I)$.
	\begin{align*}
	\Rightarrow PH_3 & < 
	\frac{3}{2} OPT + \Delta
	\end{align*}
	
	This yields the asymptotic competitive ratio:
	\begin{align*}
	\lim_{OPT \to \infty} \frac{PH_3}{OPT} & \leq 
	\lim_{OPT \to \infty} \frac{3}{2} + \frac{\Delta}{OPT}\\
	& = \lim_{OPT \to \infty} \frac{3}{2} +  \frac{max\left\{\frac{3}{2} - \frac{3}{2} \delta \; \size(I_S), \frac{9}{2} +  \frac{9}{2} \delta \; \size(I_S)\right\}}{OPT}\\
	& = \lim_{OPT \to \infty} \frac{3}{2} + max\left\{-\frac{3}{2} \delta,\frac{9}{2} \delta\right\} \; \frac{\size(I_S)}{OPT}
	\end{align*}	
	Clearly, $OPT$ is an upper bound on $\size(I_S)$. 
	However, better bounds can be found using $r_L^*$.
	\begin{align*}
	r_L^* = \frac{|I_L|}{6 \size(I_S)}
	&&\Leftrightarrow&&
	\size(I_S) = \frac{|I_L|}{6 r_L^*}
	\end{align*}
	Because of Lemma~\ref{l:lbound_OPT}, $|I_L|$ is a lower bound on $OPT$.
	\begin{align*}
	\Rightarrow \size(I_S) = \frac{|I_L|}{6 r_L^*} \leq \frac{OPT}{6 r_L^*}
	\end{align*}
	As each item $i \in I_L$ is larger than $1/2$, $|I_L| < 2 \size(I_L)$.
	According to Lemma~\ref{l:lbound_OPT}, $OPT \geq \size(I)$, i.e. $OPT \geq \size(I_S)+\size(I_L)$.
	\begin{align*}
	\Rightarrow \size(I_S)                                      &= \frac{|I_L|}{6 r_L^*} \leq \frac{\size(I_L)}{3 r_L^*}\\
	\Leftrightarrow \left(1+\frac{1}{3 r_L^*}\right) \size(I_S) &\leq  \frac{\size(I_L)+\size(I_S)}{3 r_L^*}\\
	\Leftrightarrow \size(I_S)                                  &\leq \frac{\size(I_L)+\size(I_S)}{3 r_L^*+ 1}\\
	&\leq \frac{OPT}{3 r_L^*+ 1}
	\end{align*}
	As these are both upper bounds on $\size(I_S)$, their minimum is also an upper bound on $\size(I_S)$.
	This results in an asymptotic upper bound on $\Delta$ in comparison to $OPT$:
	\begin{align*}
	\lim_{OPT \to \infty} \frac{\Delta}{OPT} &=
	max\left\{-\frac{3}{2} \delta,\frac{9}{2} \delta\right\} min\left\{\frac{1}{6 r_L^*}, \frac{1}{3 r_L^* + 1} \right\}\\
	\Leftrightarrow \lim_{OPT \to \infty} \frac{\Delta}{OPT} &=
	\begin{aligned}
	\begin{cases}
	min\left\{\dfrac{1}{4r_L^*}, \dfrac{3}{6 r_L^* + 2}\right\} (-\delta) &\text{ for } \delta \leq 0
	\vspace{2mm}\\
	min\left\{\dfrac{3}{4r_L^*}, \dfrac{9}{6 r_L^* + 2}\right\} \delta &\text{ for } \delta \geq 0
	\end{cases}
	\end{aligned}
	\end{align*}
	This results in a competitive ratio dependent on $\delta$, proving \autoref{t:cr} for $r_L^*=\frac{|I_L|}{6 \; \size(I_S)}$:
	\begin{align*}
	R_{PH3}^\infty 
	&= \lim_{OPT \to \infty} \frac{PH_3}{OPT} \\
	&\leq \lim_{OPT \to \infty} \frac{3}{2} + \frac{\Delta}{OPT} \\
	&=
	\begin{cases}
	\dfrac{3}{2} + min\left\{\dfrac{1}{4r_L^*}, \dfrac{3}{6 r_L^* + 2}\right\} (-\delta) &\text{ for } \delta \leq 0
	\vspace{2mm}\\
	\dfrac{3}{2} + min\left\{\dfrac{3}{4r_L^*}, \dfrac{9}{6 r_L^* + 2}\right\} \delta &\text{ for } \delta \geq 0.
	\end{cases}
	\end{align*}
	
	Now consider $r_L^* = 1$.
	
	As $\delta = r_L - r_L^* = r_L -1$ and $r_L \in[0,1]$, $\delta$ must be less or equal to $0$.
	Furthermore, $\size(I_S)$ must be less or equal to $|I_L|/6$, as otherwise $r_L^*$ would be less than $1$.
	
	\begin{align*}
	PH_3 &=|B_{XL}| + |B_L| + |B_M| + |B_S|
	\end{align*}
	
	As above, each large item $a \in I_{XL}$ is packed into a separate bin and medium items are packed by twos.
	\begin{align*}
	|B_{XL}|&=|I_{XL}|\\
	|B_M|   &=\left\lceil \frac{|I_M|}{2} \right\rceil \leq \frac{|I_M|+1}{2}
	\end{align*}
	
	As pointed out above, $|B_L| \leq |I_L| + max\{0,6\; \delta \; \size(I_S) + 3\}$.
	As $\delta \leq 0$, $|B_L| \leq |I_L| + 3$.
	
	The argumentat used above to get a bound on $|B_S|$ applies here as well.
	\begin{align*}
	|B_S| & \leq \frac{3}{2} (1-r_L) \size(I_S) + 1\\
	& = \frac{3}{2} (-\delta) \size(I_S) + 1\\
	& \leq \frac{3}{2} (-\delta) \frac{|I_L|}{6} + 1\\
	& = \frac{|I_L|}{4}(-\delta) + 1\\
	& \leq \frac{OPT}{4}(-\delta) + 1
	\end{align*}
	
	In summary, this yields an upper bound on PH3.
	\begin{align*}
	PH_3 &\leq 
	\underbrace{\vphantom{\frac{1}{2}} |I_{XL}|}_{=|B_{XL}|}
	+ \underbrace{\frac{|I_M|+1}{2}}_{\geq |B_M|}
	+ \underbrace{\vphantom{\frac{1}{2}} |I_L| + 3}_{\geq |B_L|}
	+ \underbrace{\frac{OPT}{4}(-\delta) + 1}_{\geq |B_S|}\\
	&= |I_{XL}| + |I_L| + \frac{|I_M|}{2} + \frac{OPT}{4}(-\delta) + \frac{9}{2}
	\end{align*}
	
	For $|I_L| \geq |I_M|$ an estimate against $OPT$ can be formulated as follows using Lemma~\ref{l:lbound_OPT}.
	\begin{align*}
	PH_3 &\leq \underbrace{\vphantom{\frac{1}{1}} |I_{XL}| + |I_L|}_{\leq OPT} 
	+ \underbrace{\frac{|I_L|}{2}}_{\leq OPT/2} + \frac{OPT}{4}(-\delta) + \frac{9}{2}\\
	&\leq \left(\frac{3}{2} + \frac{1}{4} (-\delta) \right) OPT + \frac{9}{2}
	\end{align*}
	
	For $|I_L| < |I_M|$, $PH_3$ can be estimated against $OPT$ using another bound from Lemma~\ref{l:lbound_OPT}.
	
	\begin{align*}
	PH_3 &\leq |I_{XL}| + \frac{4 |I_L| + 2 |I_M|}{4} + \frac{OPT}{4}(-\delta) + \frac{9}{2}\\
	&< \underbrace{|I_{XL}| + \frac{3 |I_L| + 3 |I_M|}{4}}_{\leq 3/2 \; OPT} + \frac{OPT}{4}(-\delta) + \frac{9}{2}\\
	&\leq \left(\frac{3}{2} + \frac{1}{4}(-\delta)\right) OPT +  \frac{9}{2}
	\end{align*}

	This bound leads straight to the competitive ratio.
	\begin{align*}
	\Rightarrow \lim_{OPT \to \infty} \frac{PH_3}{OPT} 
	&\leq \lim_{OPT \to \infty} \left(\frac{3}{2} + \frac{1}{4} (-\delta) \right) \frac{OPT}{OPT} + \frac{9}{2 \; OPT}\\
	&= \frac{3}{2} + \frac{1}{4} (-\delta)\\
	&= \dfrac{3}{2} + min\left\{\dfrac{1}{4r_L^*}, \dfrac{3}{6 r_L^* + 2}\right\} (-\delta) , \;\delta \leq 0\qedhere
	\end{align*}
\end{proof}

\subsection{Tightness}
\begin{theorem}\label{t:tightness}
	
	For any $r_L, r_L^* \in [0,1]$, the asymptotic competitive ratio given in \autoref{t:cr} is tight.
\end{theorem}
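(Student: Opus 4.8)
The plan is to reverse‑engineer the proof of \autoref{t:cr}: for every admissible pair $(r_L,r_L^*)$ I would construct a family of instances $(I_n)_{n\in\mathbb N}$ with $\textsc{opt}(I_n)\to\infty$, with $r_L^*(I_n)=r_L^*$, and with $\lim_n \mathrm{PH3}(I_n)/\textsc{opt}(I_n)$ equal to the claimed value, by forcing every inequality used in that proof to become an asymptotic equality. The chain of estimates there loses nothing precisely when: (i) every $\textsc{XL}$‑, $\textsc{M}$‑ and $\textsc{S}$‑bin is filled only to level $\approx \tfrac23$; (ii) when $\delta>0$, Next‑Fit inside the $\tfrac13$‑sub‑bins of the L‑bins is forced to be $2$‑competitive, so that $|B_{LS}|\approx 6\,\size(I'_S)$; (iii) the relevant lower bound of \autoref{l:lbound_OPT} is attained by $\textsc{opt}$ (the item‑count bound $\textsc{opt}\ge|I_L|$ is the binding one when $r_L^*\ge\tfrac13$, the volume bound $\textsc{opt}\ge\size(I)$ when $r_L^*\le\tfrac13$); and (iv) the running fraction of small‑item volume routed to L‑bins converges to exactly $r_L$, so that $\size(I'_S)\approx r_L\,\size(I_S)$ while $r_L^*=|I_L|/(6\,\size(I_S))$ holds exactly.

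Concretely I would take $n$ large items of a common size $s$, chosen close to $\tfrac12$ or close to $\tfrac23$ according to which bound of \autoref{l:lbound_OPT} governs $\textsc{opt}$ in the regime at hand; small items of size $\tfrac13$ together with a vanishing filler size $\gamma$, fed to the S‑bins in the repeating pattern $\langle\tfrac13,\tfrac13,\gamma\rangle$ so that every closed S‑bin sits at level $\tfrac23+\gamma$ (matching the bound $|B_S|\le\tfrac32(1-r_L)\,\size(I_S)+1$); and, when $\delta>0$, additional small items of size $\tfrac16$ interleaved with filler items $\gamma$, routed into the L‑bins' $\tfrac13$‑sub‑bins in the pattern $\langle\tfrac16,\gamma\rangle$, so that Next‑Fit opens a fresh sub‑bin for every $\tfrac16$‑item although an optimal packing of those sub‑bins would pair them. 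The multiplicities are fixed so that $\size(I_S)=|I_L|/(6r_L^*)$, and the global arrival order interleaves the L‑destined and S‑destined blocks — the negligible‑volume filler items supplying exactly the slack needed — so that PH3's volume fraction stays pinned at $r_L$ throughout and each item is routed as intended. When $\delta=0$ the same template with a large number of small items already yields ratio $\to\tfrac32$; the degenerate cases $r_L^*=1$ (no routing to L‑bins is possible) and $r_L^*=\tfrac13$ (the two $\min$‑terms coincide) are read off as limits of the generic construction.

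To finish, I would count PH3's bins on $I_n$ by summing $|B_{XL}|,|B_M|,|B_S|$ and $|B_L|=\max\{|I_L|,|B_{LS}|\}$ from the (by construction exact) per‑bin levels; exhibit the explicit offline packing that places one large item per bin, fills its residual capacity with the $\tfrac13$‑ and $\tfrac16$‑items and packs the remaining small items into full bins; certify its optimality via the matching bound of \autoref{l:lbound_OPT}, with a short direct counting argument in the regimes where that lemma is not by itself tight; and then divide and let $n\to\infty$, checking the resulting ratio term‑by‑term against the case distinction of \autoref{t:cr}. The main obstacle is that goals (i)–(iv) pull the large‑item size $s$ and the item multiplicities in conflicting directions — forcing Next‑Fit waste in the L‑bins favours many small‑ish items, forcing $\textsc{opt}$ down to the chosen \autoref{l:lbound_OPT} bound demands that the small‑item volume fit snugly inside the large items' residual capacities, and pinning $r_L^*$ fixes the ratio $|I_L|:\size(I_S)$ — so each regime requires its own calibration of $s$ and of the counts. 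A secondary difficulty is the book‑keeping needed to verify that the interleaved arrival order really keeps PH3's routing fraction at $r_L$ and really reproduces the worst‑case bin levels simultaneously for both the S‑bins and the L‑sub‑bins.
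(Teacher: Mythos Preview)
Your plan is sound for $r_L^*\le\tfrac13$, but it has a genuine gap for $r_L^*>\tfrac13$. You say that in this regime only the item-count bound $\textsc{opt}\ge|I_L|$ needs to be attained; in fact the upper-bound proof of \autoref{t:cr} uses the volume bound $\textsc{opt}\ge\size(I)$ to produce the base term $\tfrac32$ and \emph{then}, separately, the bound $\textsc{opt}\ge|I_L|$ to control $\size(I_S)/\textsc{opt}$ inside $\Delta$. Tightness therefore requires $\size(I)\approx|I_L|\approx\textsc{opt}$ simultaneously, together with $|I_L|\approx 2\,\size(I_L)$ (the step $\tfrac34|I_L|<\tfrac32\size(I_L)$), which forces the large items to sit at $\approx\tfrac12$. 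With only large and small items this gives $\size(I)\approx\tfrac12|I_L|+|I_L|/(6r_L^*)$, which equals $|I_L|$ only at $r_L^*=\tfrac13$; for $r_L^*=1$, say, your instance has $\size(I)\approx\tfrac23|I_L|$ and the resulting ratio is $1+\tfrac14(-\delta)$, strictly below the claimed $\tfrac32+\tfrac14(-\delta)$. Pushing the large-item size toward $\tfrac23$ does not rescue this: it loosens the step $|I_L|<2\,\size(I_L)$ by exactly the amount it tightens the volume bound, and for $r_L^*>\tfrac12$ the required size $1-1/(6r_L^*)$ exceeds $\tfrac23$ and leaves the $L$ class altogether.

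The missing ingredient is a fourth item class. The paper uses medium items of size $\approx\tfrac13$ in number $|I_M|\approx(6r_L^*-2)N$ against $|I_L|\approx 4r_L^*N$: each medium item pairs with a large item in $\textsc{opt}$, so neither $|I_L|$ nor $\textsc{opt}$ moves; $\size(I)$ rises to $\approx|I_L|$, closing the gap above; and PH3 spends $\approx|I_M|/2$ extra M-bins, supplying exactly the $(3r_L^*-1)N$ bins your accounting is short. No calibration of $s$ alone can replace this---you need to add medium items to your construction for every $r_L^*>\tfrac13$.
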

{	\newcommand{\rep}{\mathop{\times}}
	\newcommand{\half}{\frac{1}{2} + \frac{\varepsilon}{2}}
	\newcommand{\thirdE}{\frac{1}{3} + \frac{\varepsilon}{2}}
	\newcommand{\sixth}{\frac{1}{6} - \varepsilon}
	\newcommand{\third}{\frac{1}{3} - 2 \varepsilon}
	\newcommand{\threeE}{3 \varepsilon}
	\newcommand{\twelveE}{12 \varepsilon}
	\newcommand{\halfT}{1/2 + \varepsilon/2}
	\newcommand{\thirdET}{1/3 + \varepsilon/2}
	\newcommand{\sixthT}{1/6 - \varepsilon}
	\newcommand{\thirdT}{1/3 - 2 \varepsilon}

	\begin{proof}
		Let $\langle a_1, a_2, \dots a_k \rangle \rep n$ with $n \in \mathbb{N}$ denote $n$ repetitions of the sequence $\langle a_1, a_2, \dots a_k \rangle$.
		
		Let $N \in \mathbb{N}$ and $\varepsilon = 1 / (12 N + 2)$.
		
		Let $I$ be a sequence consisting of the concatenated subsequences $I_S$, $I_M$ and $I_L$. 
		Let $I_S$ be a sequence consisting of the two interleaved subsequences $I_{SL}$ and $I_{LL}$.
		
		Let
		\begin{align*}
		I_L &= \left(\half\right) \rep n_L \text{ with } n_L = \lceil 4 r_L^* N \rceil \\
		I_M &= \left(\thirdE\right) \rep n_M \text{ with } n_M = 
		\begin{cases}
		0 &\text{ for } r_L^* \leq 1/3 \\ 
		\lfloor (6 r_L^* - 2) N \rfloor &\text{ for } r_L^* \geq 1/3
		\end{cases} \\
		I_{SS} &= \left(\third, \sixth, \sixth, \twelveE\right) \rep n_{SS} \text{ with } n_{SS} = \lceil n_{SS}' \rceil = \lceil (1 - r_L) N \rceil \\
		I_{SL} &= \left(\sixth, \threeE\right) \rep n_{SL} \text{ with } n_{SL} = \lceil n_{SL}' \rceil = \lceil 4 r_L N \rceil
		\end{align*}
		
		Let $I_S$ be interleaved in such a way that whenever PH3 would pack the next item in an L-bin, the next item in $I_S$ is the next item in $I_{SL}$, otherwise it is the next item in $I_{SS}$.
		If the corresponding subsequence contains no more items, the next item is taken from the other one.
		
		Note that the size of small items packed into each S-bin is exactly four times the size of small items packed into each L-bin.
		As $n_{SL} \geq n_{SL}'$ and $n_{SS} \geq n_{SS}'$ and
		
		\begin{align*}
		\frac{\size(I_{SL})}{\size(I_S)} \approx
		& \frac{n_{SL}'}{n_{SL}' + 4 n_{SS}'} = r_L.
		\end{align*}
		PH3 will pack at least $n_{SS}$ S-bins and $n_{SL}$ L-bins.
		The last S-bin packed by PH3 may contain items from $I_{SS}$ and the last L-bin might contain items from $I_{SL}$, depending on the differences $n_{SS} - n'_{SS}$ and $n_{SL} - n'_{SL}$.
		For the sake of simplicity, the up to two possible additional bins packed with small items are ignored, as they do not have impact on the competitive ratio due to the infinite length of the input sequences considered.
		
		\begin{figure}[h]
			\begin{tikzpicture}[scale=3]

\draw [<->] (-.15,0) -- (-.15,1) node [left,pos=.5] {1};

\draw [dashed](0,2/3) -- (0,1) -- (.5,1) -- (.5,2/3);
\draw (0,0) rectangle +(.5,2/3);
\foreach \i in {2/6,3/6} \draw (0,\i) -- +(.5,0);
\draw (.5,1/6) node [right] {$1/3 - 2 \varepsilon$};
\draw (.5,5/12) node [right] {$1/6 - \varepsilon$};
\draw (.5,7/12) node [right] {$1/6 - \varepsilon$};
\draw [<-] (.52,2/3+.01) -- +(.2,.1) node [right] {$12 \varepsilon$};
\draw (.25,0) node [below=.3cm] {S-bin};

\draw [dashed](1.5,2/3) -- (1.5,1) -- (2,1) -- (2,2/3);
\draw (1.5,0) rectangle +(.5,2/3);
\foreach \i in {2/6,4/6} \draw (1.5,\i) -- +(.5,0);
\foreach \i in {1/2,1/6}\draw (2,\i) node [right]{$1/3 + \varepsilon/2$};
\draw (1.75,0) node [below=.3cm] {M-bin};

\begin{scope}[shift={(-3.75,-1.5)}]
\draw [<->] (2.85,0) -- (2.85,1/3) node [pos=.5,left] {$1/3$};
\draw [<->] (2.85,1) -- (2.85,1/3) node [pos=.5,left] {$2/3$};
\draw [dashed](3,5/6) -- (3,1) -- (3.5,1) -- (3.5,5/6);
\draw [dashed](3,1/6) -- (3,1/3);
\draw [dashed](3.5,1/6) -- (3.5,1/3);
\draw (3,1/3) rectangle (3.5,5/6);
\draw (3.5,1/3+1/2/2) node [right] {$1/2+\varepsilon/2$};
\draw (3,0) rectangle (3.5,1/6);
\draw (3.5,1/12) node [right] {$1/6-\varepsilon$};
\draw [<-] (3.52,1/6+.01) -- +(.2,.1) node [right] {$3 \varepsilon$};
\draw (3.25,0) node [below=.3cm] {L-bin (a)};
\end{scope}

\begin{scope}[shift={(-2.25,-1.5)}]
\draw [dashed](3,5/6) -- (3,1) -- (3.5,1) -- (3.5,5/6);
\draw [dashed](3,1/3) -- (3,0) -- (3.5,0) -- (3.5,1/3);
\draw (3,1/3) rectangle (3.5,5/6);
\draw (3.5,1/3+1/2/2) node [right] {$1/2+\varepsilon/2$};
\draw (3.25,0) node [below=.3cm] {L-bin (b)};
\end{scope}

\begin{scope}[shift={(-.75,-1.5)}]
\draw [dashed](3,1/6) -- (3,1) -- (3.5,1) -- (3.5,1/6);
\draw [dotted](3,1/3) -- (3.5,1/3);
\draw (3,0) rectangle (3.5,1/6);
\draw (3.5,1/12) node [right] {$1/6-\varepsilon$};
\draw [<-] (3.52,1/6+.01) -- +(.2,.1) node [right] {$3 \varepsilon$};
\draw (3.25,0) node [below=.3cm] {L-bin (c)};
\end{scope}
\end{tikzpicture}
			\caption{The main types of bins packed by PH3. Dependent on $r_L$ and $r_L^*$, type (a) L-bins and type (b) or type (c) L-bins are packed.}
			\label{f:tightness_ssc_bin_types}
		\end{figure}
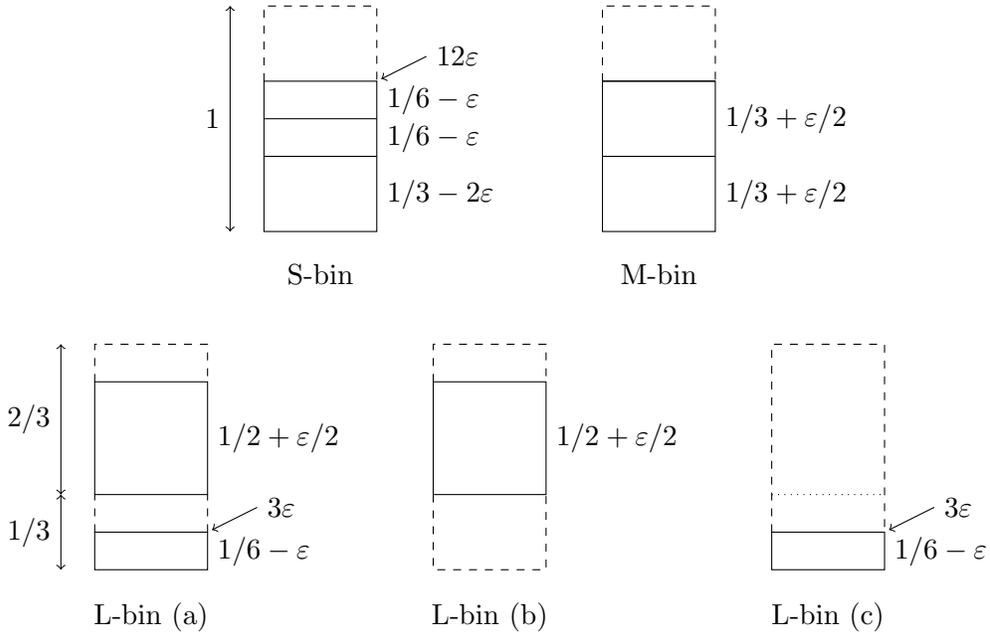
		
		When packing S-bins, PH3 packs the items $\thirdT$, $\sixthT$, $\sixthT$, $\twelveE$.
		The next item with size $\thirdT$ does no longer fit and a new bin is opened.
		As the subsequenc packed into each S-bin is repeated $n_{SS}$ times, $|B_S| = n_{SS}$.
		
		As items of size $\thirdET$ are packed by twos in M-bins, $|B_M| = \lceil n_M / 2 \rceil$.
		
		When packing the 1/3-sub-bins of the L-bins, PH3 packs one item of size $\sixthT$, then one item of size $\threeE$.
		The next item has size $\sixthT$.
		As $\sixthT + \threeE + \sixthT > 1/3$, a new bin will be opened.
		Accordingly, each L-bin $b \in B_L$ that is packed with small items will contain exactly the items $\sixthT$ and $\threeE$.
		As these two items occur $n_{SL}$ times each in $I_{SL}$, the number of these bins is $n_{SL}$.
		
		The items of size $\halfT$ are large and thus packed into the $2/3$-sub-bins of L-bins individually.
		As there are $n_L$ such items, the number of L-bins that contain a large item is $n_L$ as well.
		
		The total number of L-bins is the maximum of the number of L-bins that contain small items and L-bins that contain large items: $|B_L| = \max\{n_{SL}, n_L\}$.
		
		As the input sequence contains no extra large items, $|B_{XL}| = 0$.
		
		The packing of the bins is illustrated in \autoref{f:tightness_ssc_bin_types}.
		Overall, PH3 needs the following number of bins.
		\begin{align*}
		PH_3 &= |B_{XL}| + |B_L| + |B_M| + |B_L|\\
		&= 0 + \max\{n_{SL}, n_L\} + \left\lceil \frac{n_M}{2} \right\rceil + n_{SS}\\
		&\geq \max\{4 r_L N, 4 r_L^* N\} + \frac{n_M}{2} + n_{SS} \\
		&= \max\{r_L, r_L^*\} 4 N + \frac{n_M}{2} + (1 - r_L) N \\
		&= 3 r_L^* N + \max\{\delta, 0\} 4 N + \frac{n_M}{2} + N - \delta N 
		\end{align*}
		
		To get an upper bound on OPT, consider the algorithm FFD.
		FFD is an offline bin packing algorithm that sorts the items in decreasing order and puts each item in the first bin that has enough space left to fit the item \cite{doi:10.1137/0203025}.
		Note that although the order of the items is fixed, FFD can simulate an ordered list by assigning a bin to each item beforehand.
		
		\begin{figure}[h]
			\begin{tikzpicture}[scale=3]

\draw [<->] (-.15,0) -- (-.15,1) node [left,pos=.5] {1};

\draw (0,0) rectangle (.5,1);
\foreach \i in {1/2,5/6} \draw (0,\i) -- (.5,\i);
\draw (.5,1/4) node [right] {$1/2+\varepsilon/2$};
\draw (.5,1/2+1/6) node [right] {$1/3+\varepsilon/2$};
\draw (.5,1/2+1/3+1/12) node [right] {$1/6-\varepsilon$};
\draw (.25,0)  node [below] {$\underbrace{\hspace{1.5cm}}_{\in B_M}$};

\begin{scope}[shift={(1.5,0)}]
\draw (0,0) rectangle (.5,1);
\foreach \i in {1/2,5/6} \draw (0,\i) -- (.5,\i);
\draw (.5,1/4) node [right] {$1/2+\varepsilon/2$};
\draw (.5,1/2+1/6) node [right] {$1/3 - 2 \varepsilon$};
\draw (.5,1/2+1/3+1/12) node [right] {$1/6-\varepsilon$};
\end{scope}
\begin{scope}[shift={(3,0)}]
\draw (0,0) rectangle (.5,1);
\foreach \i in {1/2,4/6,5/6} \draw (0,\i) -- (.5,\i);
\draw (.5,1/4) node [right] {$1/2+\varepsilon/2$};
\draw (.5,1/2+1/12) node [right] {$1/6-\varepsilon$};
\draw (.5,1/2+1/6+1/12) node [right] {$1/6-\varepsilon$};
\draw (.5,1/2+1/3+1/12) node [right] {$1/6-\varepsilon$};
\end{scope}
\draw (2.5,0)  node [below] {$\underbrace{\hspace{6cm}}_{\in B_L}$};

\begin{scope}[shift={(.75,-1.5)}]
\draw [<->] (-.15,0) -- (-.15,1) node [left,pos=.5] {1};
\draw (0,0) rectangle (.5,1);
\foreach \i in {1/3,2/3} \draw (0,\i) -- (.5,\i);
\draw (.5,1/6) node [right] {$1/3-2\varepsilon$};
\draw (.5,3/6) node [right] {$1/3-2\varepsilon$};
\draw (.5,5/6) node [right] {$1/3-2\varepsilon$};
\end{scope}

\begin{scope}[shift={(2.25,-1.5)}]
\draw (0,0) rectangle (.5,1);
\foreach \i in {1/6,2/6,3/6,4/6,5/6} \draw (0,\i) -- (.5,\i);
\foreach \i in {1/12,3/12,5/12,7/12,9/12,11/12} \draw (.5,\i) node [right] {$1/6-\varepsilon$};
\end{scope}

\draw (1.75,-1.5) node [below] {$\underbrace{\hspace{6cm}}_{\in B_S}$};

\end{tikzpicture}
			\caption{The main types of bins packed by FFD.}
			\label{f:tightness_ffd}
		\end{figure}
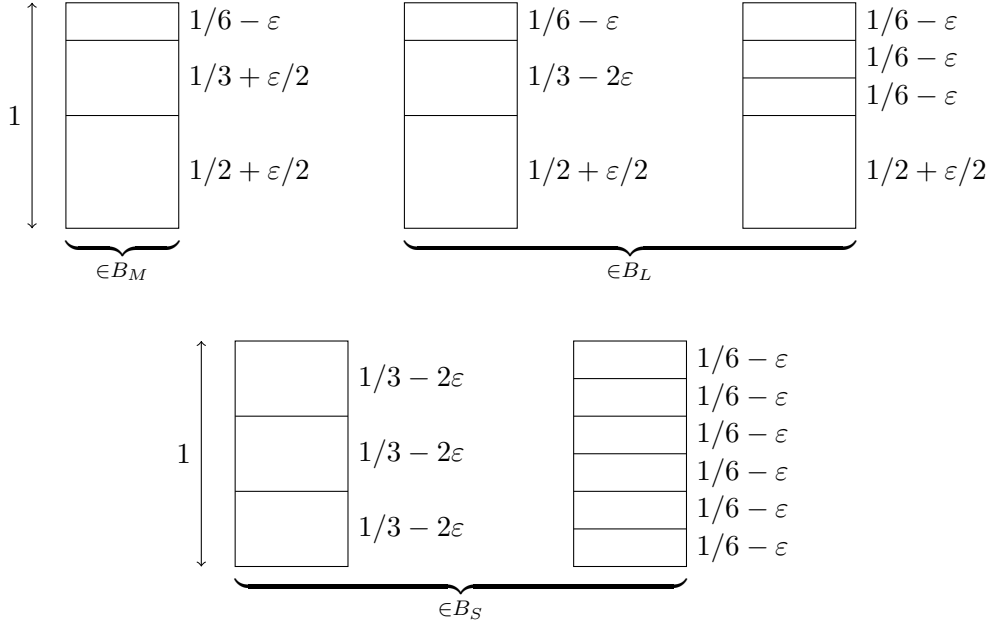
		
		FFD will first pack each large item in a separate bin, opening $n_L$ bins.
		
		The medium items are packed one each in the first $n_M$ bins alongside the large items.
		Note that there are $n_L = 4 r_L^* N$ large items and $n_M = \lfloor (6 r_L^* - 2) N \rfloor$ medium items.
		As $r_L^* \leq 1$, $n_L \geq n_M$ and thus each medium item can be packed with a large item.
		Let $B_L$ the set of bins containing only a large item and $B_M$ the set of bins containing a large and a medium item.
		Note that $|B_M| = n_M$ and $|B_L| = n_L - n_M$.
		
		Items of size $\thirdT$ are packed into the bins $B_L$ one each.
		If there are more such items than there are bins in $B_L$, the remaining items are packed by threes.
		
		Afterwards, the items of size $\sixthT$ are packed into the bins $B_M$ and $B_L$, items that do not fit are packed into additional bins.
		
		Each bin in $B_M$ then contains, besides to the large and the medium item, exactly one item of size $\sixthT$.
		
		Each bin $b \in B_L$ is packed either with the items $\{\halfT, \thirdT, \sixthT\}$ or the items $\{\halfT, \sixthT, \sixthT, \sixthT\}$.
		In either case, the size of the packed small items is $\size(b \cap S) = 1/2 - 3 \varepsilon$.
		
		Let $size_S$ denote the total size of all items of size $\thirdT$ and $\sixthT$.
		Let $size_S'$ denote the total size those items of size $\thirdT$ and $\sixthT$ that are packed into additional bins.
		\begin{align*}
		size_S' 
		&= size_S - |B_L| \left(\frac{1}{2} - 3 \varepsilon\right) - |B_M| \left( \sixth \right) \\
		&= n_{SS} \left(\third\right) + (n_{SL} + 2 n_{SS} - n_M) \left(\sixth\right) - (n_L - n_M) \left(\frac{1}{2} - 3 \varepsilon\right) \\
		&= (n_{SL} + 4 n_{SS} - 3 n_L + 2 n_M) \left(\sixth\right) 
		\end{align*}
		
		When packing the items of size $\thirdT$ and $\sixthT$ in additional bins, FFD packs three items of size $\thirdT$ or six items of size $\sixthT$ in each bin except one bin, which may contain items of both sizes, and the last bin, which may contain fewer items.
		Each of these bins, except for the last hast size $1 - 6 \varepsilon$.
		Let $B_S$ denote the set of these bins.
		\begin{align*}
		B_S &= \left\lceil \frac{size_S'}{1 - 6 \varepsilon} \right\rceil\\
		&= \left\lceil \frac{1}{1 - 6 \varepsilon} \left(\sixth\right) (n_{SL} + 4 n_{SS} - 3 n_L + 2 n_M) \right\rceil\\
		&\leq \frac{n_{SL} + 4 n_{SS} - 3 n_L + 2 n_M + 6}{6}
		\end{align*}
		
		For packing the items of size $\twelveE$ and the items of size $\threeE$, at most one additional bin is used, as
		\begin{align*}
		n_{SS} (12 \varepsilon) + n_{SL} (3 \varepsilon)
		&\leq (12 n_{SS}' + 1) \varepsilon + (3 n_{SL} + 1) \varepsilon\\
		&= (12 - 12 r_L) N \varepsilon + 12 r_L N \varepsilon + 2 \varepsilon\\
		&= (12 N + 2) \varepsilon\\
		&= 1
		\end{align*}
		
		\autoref{f:tightness_ffd} shows a visualization of the main types of bins packed by FFD before items of size $\twelveE$ or smaller are packed.
		As FFD provides a feasible solution to the offline bin packing problem, this also yields a lower bound on the optimal solution OPT.
		
		\begin{align*}
		OPT \leq FFD
		&\leq |B_L| + |B_M| + |B_S| + 1\\
		&\leq n_L + \frac{n_{SL} + 4 n_{SS} - 3 n_L + 2 n_M + 6}{6} + 1\\
		&= \frac{3 n_L + n_{SL} + 4 n_{SS} + 2 n_M + 12}{6}\\
		&= \frac{3 \lceil 4 r_L^* N \rceil + \lceil 4 r_L N \rceil + 4 \lceil (1 - r_L) N \rceil + 2 n_M + 12}{6} \\
		&\leq \frac{12 r_L^* N + 4 r_L N + 4 (1 - r_L) N + 2 n_M + 20}{6} \\
		&= \frac{(12 r_L^* + 4) N + 2 n_M + 20}{6}
		\end{align*}

		Consider $r_L^* \geq 1/3$.
		
		\begin{align*}
		n_M &= \lfloor (6 r_L^* - 2) N \rfloor\\
		OPT &\leq  \frac{(12 r_L^* + 4) N + 2 \lfloor(6 r_L^* -2) N \rfloor + 20}{6}\\
		&\leq \frac{24 r_L^* N + 20}{6} < 4 r_L^* N + 4\\
		PH_3 &\geq 3 r_L^* N + \max\{\delta, 0\} 4 N + \frac{\lfloor(6 r_L^* - 2) N \rfloor}{2} + N - \delta N \\
		&\geq 3 r_L^* N + \max\{\delta, 0\} 4 N + (3 r_L^* - 1) N + N - \delta N - 1\\
		&=    6 r_L^* N + \max\{\delta, 0\} 4 N - \delta N - 1
		\end{align*}
		
		Note that for $r_L^* \geq 1/3$, the following two equations hold.
		
		\begin{align*}
		min\left\{\dfrac{1}{4r_L^*}, \dfrac{3}{6 r_L^* + 2}\right\} &=\dfrac{1}{4r_L^*}\\
		min\left\{\dfrac{3}{4r_L^*}, \dfrac{9}{6 r_L^* + 2}\right\} &=\dfrac{3}{4r_L^*}
		\end{align*}
		
		Note that for $OPT \to \infty$, $N \to \infty$ also holds.
		Thus, for $N \to \infty$, a lower bound on the competitive ratio for $r_L^* \geq 1/3$ can be given.
		
		\begin{align*}
		\lim_{OPT \to \infty} \frac{PH_3}{OPT} 
		&\geq \lim_{N \to \infty} \frac{6 r_L^* N + \max\{\delta, 0\} 4 N - \delta N - 1}{4 r_L^* N + 4}\\
		&= \frac{3}{2} + \frac{4 \max\{\delta, 0\} - \delta}{4 r_L^*}\\
		&=\begin{cases}
		\dfrac{3}{2} + \dfrac{3}{4 r_L^*} \delta     &\text{ for } \delta \geq 0 
		\vspace{2mm}\\
		\dfrac{3}{2} + \dfrac{1}{4 r_L^*} (- \delta) &\text{ for } \delta \leq 0 \\
		\end{cases}\\
		&=R_{PH3}^\infty \text{ for } r_L^* \geq 1/3
		\end{align*}
		
		Now consider $r_L^* \leq 1/3$.
		
		\begin{align*}
		n_M &= 0\\
		OPT &\leq  \frac{(12 r_L^* + 4) N + 20}{6}\\
		PH_3 &\geq 3 r_L^* N + \max\{\delta, 0\} 4 N + N - \delta N 
		\end{align*}
		
		Note that for $r_L^* \leq 1/3$, the following two equations hold.
		
		\begin{align*}
		min\left\{\dfrac{1}{4r_L^*}, \dfrac{3}{6 r_L^* + 2}\right\} &=\dfrac{3}{6r_L^* + 2}\\
		min\left\{\dfrac{3}{4r_L^*}, \dfrac{9}{6 r_L^* + 2}\right\} &=\dfrac{9}{6r_L^* + 2}
		\end{align*}
		
		As above, for $N \to \infty$, a lower bound on the competitive ratio for $r_L^* \leq 1/3$ can be given .
		
		\begin{align*}
		\lim_{OPT \to \infty} \frac{PH_3}{OPT} 
		&\geq \lim_{N \to \infty} 6\frac{3 r_L^* N + \max\{\delta, 0\} 4 N + N - \delta N}{(12 r_L^* + 4) N + 20}\\
		&= \frac{9 r_L^* + 12 \max\{\delta, 0\} + 3 - 3 \delta}{6 r_L^* + 2}\\
		&= \frac{3}{2} + \frac{12 \max\{\delta, 0\}- 3 \delta}{6 r_L^* + 2}\\
		&=\begin{cases}
		\dfrac{3}{2} + \dfrac{9}{6 r_L^* + 2} \delta     &\text{ for } \delta \geq 0 
		\vspace{2mm}\\
		\dfrac{3}{2} + \dfrac{3}{6 r_L^* + 2} (- \delta) &\text{ for } \delta \leq 0 \\
		\end{cases}\\
		&=R_{PH3}^\infty \text{ for } r_L^* \leq 1/3 \qedhere
		\end{align*}
	\end{proof}
}
\section{Parallel \textsc{Predictive Harmonic$_3$}}
\subsection{Competitive Ratio for PH3 as 1-Copy Online Algorithm}
\newcommand{\rph}{\mathop{r_{PH3}^\infty}}

To optimize the performance for PH3 as a 1-copy algorithm, we determine the optimal value for $r_L$ with respect to 
minimizing the asymptotic competitive ratio over all $r_L^* \in [0,1]$.

\begin{lemma}[Monotonicity of competitive ratio of PH3]
\label{l:monotony}
For any fixed $r_L\in[0,1]$, the competitive factor is monotonically decreasing for $r_L^*\in [0,r_L]$ and monotonically increasing for $r_L^*\in [r_L,1]$.
\ignore{
	For any fixed $r_L \in [0,1]$ and $\delta \leq 0$, the competitive ratio of PH3 increases with $r_L^*$ increasing and $\delta$ decreasing.
	For any fixed $r_L \in [0,1]$ and $\delta \geq 0$, the competitive ratio of PH3 increases with $r_L^*$ decreasing and $\delta$ increasing.
}
\end{lemma}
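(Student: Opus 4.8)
The plan is to treat the upper bound on $R_{PH3}^\infty$ from \autoref{t:cr} — which by \autoref{t:tightness} is exactly the competitive factor — as an explicit function $f(r_L^*)$ of $r_L^*$ for a fixed $r_L\in[0,1]$, and to show directly that $f$ is decreasing on $[0,r_L]$ and increasing on $[r_L,1]$. The first step is to resolve the two minima appearing in \autoref{t:cr}. A short computation shows that $\tfrac{1}{4r_L^*}\le\tfrac{3}{6r_L^*+2}$ and $\tfrac{3}{4r_L^*}\le\tfrac{9}{6r_L^*+2}$ both hold exactly when $r_L^*\ge\tfrac13$, so in both branches of the case distinction the active term of the minimum switches at $r_L^*=\tfrac13$. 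Together with the sign of $\delta=r_L-r_L^*$, this partitions $[0,1]$ into at most four intervals, on each of which $f(r_L^*)$ equals one of the closed forms $\tfrac32+\tfrac{9(r_L-r_L^*)}{6r_L^*+2}$ or $\tfrac32+\tfrac{3(r_L-r_L^*)}{4r_L^*}$ (the cases $r_L^*\le r_L$) and $\tfrac32+\tfrac{3(r_L^*-r_L)}{6r_L^*+2}$ or $\tfrac32+\tfrac{r_L^*-r_L}{4r_L^*}$ (the cases $r_L^*\ge r_L$); note that in the two forms that remain valid at $r_L^*=0$ the denominator $6r_L^*+2$ does not vanish, so there is no singularity to worry about.

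Next I would differentiate each of these four rational functions with respect to $r_L^*$. For the two ``$\le r_L$'' forms the derivatives are $\tfrac{-18-54r_L}{(6r_L^*+2)^2}$ and $-\tfrac{3r_L}{4(r_L^*)^2}$, both non-positive; for the two ``$\ge r_L$'' forms they are $\tfrac{6+18r_L}{(6r_L^*+2)^2}$ and $\tfrac{r_L}{4(r_L^*)^2}$, both non-negative (and strictly signed whenever $r_L>0$). Hence $f$ is non-increasing on each subinterval of $[0,r_L]$ and non-decreasing on each subinterval of $[r_L,1]$.

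To upgrade piecewise monotonicity to global monotonicity I would check that $f$ is continuous at the two breakpoints that can appear in the interior of these ranges. At $r_L^*=\tfrac13$ the two members of each minimum coincide ($\tfrac{1}{4\cdot 1/3}=\tfrac34=\tfrac{3}{6\cdot 1/3+2}$, and likewise $\tfrac{3}{4\cdot 1/3}=\tfrac94=\tfrac{9}{6\cdot 1/3+2}$), so the two formulas meeting there agree; at $r_L^*=r_L$ we have $\delta=0$ and every formula evaluates to $\tfrac32$, so the ``decreasing'' and ``increasing'' regimes join continuously at the common value $\tfrac32$, which is also the global minimum. A continuous function that is non-increasing on each piece of a partition of $[0,r_L]$ is non-increasing on all of $[0,r_L]$, and symmetrically on $[r_L,1]$; this gives the claim (strictly, for $r_L\in(0,1)$; the degenerate cases $r_L=0$ and $r_L=1$ collapse one of the two intervals to a point).

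The only real work is bookkeeping: depending on whether $r_L\le\tfrac13$ or $r_L\ge\tfrac13$, the breakpoint $r_L^*=\tfrac13$ falls into the left range, the right range, or coincides with $r_L$, so one must enumerate these sub-cases and confirm the derivative signs and the continuity matchings in each. None of the individual computations is difficult, and I do not expect any genuine obstacle beyond keeping this case analysis organized.
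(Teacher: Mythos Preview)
Your proposal is correct and follows essentially the same approach as the paper's own proof: both resolve the minima in \autoref{t:cr} to obtain four explicit rational functions on the regions determined by the sign of $\delta$ and by $r_L^*\lessgtr\tfrac13$, differentiate each to verify the sign, and check that the pieces match at $r_L^*=\tfrac13$. Your write-up is in fact slightly more careful in explicitly noting the continuity at $r_L^*=r_L$ and the degenerate endpoint cases $r_L\in\{0,1\}$, but the core argument is identical.
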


\begin{proof}

Assume $r_L$ to be fixed.
Let $r_{+,<}, r_{-,<}: [0,1/3] \to \mathbb{R}$ and $r_{+,>}, r_{-,>}: [1/3,1] \to \mathbb{R}$ with

\begin{align*}
r_{-,<}(r_L^*) &= \frac{3}{2} + \frac{3}{6 r_L^* + 2} (-\delta) &= \Rph \text{ for } \delta \leq 0, r_L^* \leq \frac{1}{3}\\
r_{-,>}(r_L^*) &= \frac{3}{2} + \frac{1}{4 r_L^*} (-\delta)     &= \Rph \text{ for } \delta \leq 0, r_L^* \geq \frac{1}{3}\\
r_{+,<}(r_L^*) &= \frac{3}{2} + \frac{9}{6 r_L^* + 2} \delta    &= \Rph \text{ for } \delta \geq 0, r_L^* \leq \frac{1}{3}\\
r_{+,>}(r_L^*) &= \frac{3}{2} + \frac{3}{4 r_L^*} \delta        &= \Rph \text{ for } \delta \geq 0, r_L^* \geq \frac{1}{3}\\
\end{align*}

Consider the derivative of $r_{-,<}$ and $r_{-,>}$.

\begin{align*}
\frac{\partial}{\partial r_L^*} r_{-,<}(r_L^*) 
&= \frac{\partial}{\partial r_L^*} \left(\frac{3}{2} + \frac{3}{6 r_L^* + 2} (-\delta)\right)\\
&= \frac{\partial}{\partial r_L^*} \left( \frac{3 (r_L^* - r_L)}{6 r_L^* + 2} \right)\\
&= \frac{18 r_L + 6}{(6 r_L^* + 2)^2} \geq 0 \text{ for } 0 \leq r_L \leq r_L^* \leq \frac{1}{3}\\
\frac{\partial}{\partial r_L^*} r_{-,>}(r_L^*) 
&= \frac{\partial}{\partial r_L^*} \left(\frac{3}{2} + \frac{1}{4r_L^*} (-\delta)\right)\\
&= \frac{\partial}{\partial r_L^*} \left(\frac{r_L^* - r_L}{4 r_L^*} \right)\\
&= \frac{r_L}{4 (r_L^*)^2} \geq 0 \text{ for } 0 \leq r_L \leq r_L^* \text{ and } \frac{1}{3} \leq r_L^* \leq 1
\end{align*}

As the derivatives of $r_{-,<}$ and $r_{-,>}$ are both non-negative in their respective domains, they are both monotonically increasing.
Because $r_{-,<}(\frac 1 3) = r_{-,>}(\frac 1 3)$, we conclude that the competitive ratio is monotonically increasing for $r_L^*\in [r_L,1]$. 

Now consider the derivative of $r_{+,<}$ and $r_{+,>}$.

\begin{align*}
\frac{\partial}{\partial r_L^*} r_{+,<}(r_L^*) 
&= \frac{\partial}{\partial r_L^*} \left(\frac{3}{2} + \frac{9}{6 r_L^* + 2} \delta\right)\\
&= \frac{\partial}{\partial r_L^*} \left( \frac{9 (r_L - r_L^*)}{6 r_L^* + 2} \right)\\
&= \frac{-54 r_L - 18}{(6 r_L^* + 2)^2} \leq 0 \text{ for } r_L^* \leq r_L \leq 1 \text{ and } 0 \leq r_L^* \leq \frac{1}{3}\\
\frac{\partial}{\partial r_L^*} r_{+,>}(r_L^*) 
&= \frac{\partial}{\partial r_L^*} \left(\frac{3}{2} + \frac{3}{4r_L^*} \delta\right)\\
&= \frac{\partial}{\partial r_L^*} \left(\frac{3 (r_L - r_L^*)}{4 r_L^*} \right)\\
&= \frac{-3 r_L}{4 (r_L^*)^2} \leq 0 \text{ for } \frac{1}{3} \leq r_L^* \leq r_L \leq 1
\end{align*}

As the derivatives of $r_{+,<}$ and $r_{+,>}$ are both non-positive in their respective domains, they are both monotonically decreasing.
Because $r_{+,<}(\frac 1 3)=r_{+,>}(\frac 1 3)$, we conclude that the competitive ratio is monotonically decreasing for $r_L^*\in[0,r_L]$.
\end{proof}

Because of Lemma~\ref{l:monotony}, the competitive ratio does not decrease with $r_L^*$ increasing for $\delta \leq 0$.
Thus, as an upper bound on the competitive ratio for $\delta \leq 0$, only the competitive ratio for $r_L^* = 1$ has to be considered.
\begin{align*}
\Rph &\leq \frac{3}{2} + \frac{1}{4} (-\delta) \text{ for } \delta \leq 0\\
&= \frac{3}{2} + \frac{1}{4} (1 - r_L)\\
&= \frac{7}{4} - \frac{r_L}{4}
\end{align*}

For $\delta \geq 0$, the competitive ratio does not decrease with $r_L^*$ decreasing.
In this case, the competitive ratio for $r_L^* = 0$ is an upper bound on the competitive ratio.
\begin{align*}
\Rph &\leq \frac{3}{2} + \frac{9}{2} \delta \text{ for } \delta \geq 0\\
&= \frac{3}{2} + \frac{9}{2} (r_L - 0)\\
&= \frac{3}{2} + \frac{9}{2} r_L
\end{align*}

At the same time, these values are lower bounds on the overall competitive ratio.
Given these bounds, this linear program can be formulated to minimize the competitive ratio:
\begin{align*}
\mathop{\text{Minimize }} \Rph &\\
\mathop{\text{Subject to }}
  \Rph &\geq \frac{7}{4} - \frac{r_L}{4}\\
  \Rph &\geq \dfrac{3}{2} + \dfrac{9}{2} r_L\\
r_L & \geq 0\\
r_L & \leq 1\\
\end{align*}

The optimal solution for this linear program is $r_L=1/19$ and $\Rph = 33/19 < 1.7369$.
\autoref{f:ph3_universe} shows the asymptotic competitive ratio of PH3 over $r_L^*$ for $r_L = 1/19$.

\begin{figure}[h]
	\centering
\input{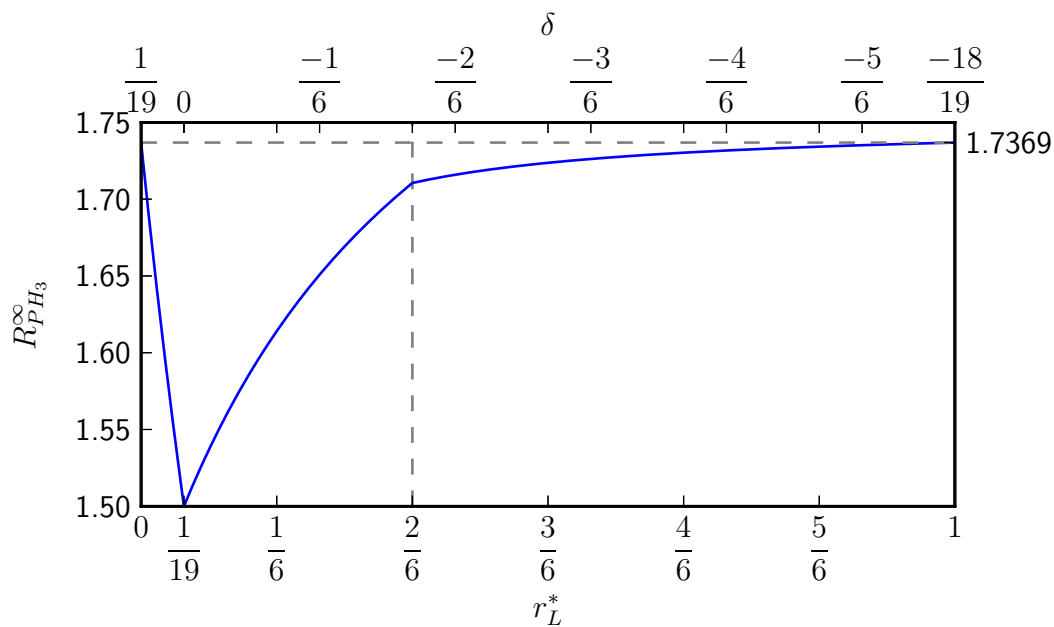}
\caption{Competitive ratio of the optimal 1-copy PH3 algorithm dependent on $r_L^*$ for a fixed $r_L$.}
\label{f:ph3_universe}
\end{figure}

Compared to other known algorithms for online bin packing, PH3 is not a good choice for worst-case behavior.
Among the classical algorithms, only NF and WF, both of which are 2-competitive, are worse than PH3.
Any AAF algorithm achieves an asymptotic competitive ratio $R_{AAF}^\infty = 1.7$ \cite{Csirik1998} and thus performs slightly better than PH3.
The best-performing online algorithm for bin packing currently known, \textsc{Son Of Harmonic}, is 1.5816-competitive and thus clearly superior to PH3 \cite{DBLP:journals/corr/HeydrichS15}.

However, if we know in advance that $r_L^*$ is restricted to some interval $I_r = [a,b] \subset [0,1]$, the above argument can be used to prove a better competitive ratio.

\subsection{Competitive Ratio for PH3 as \boldmath{$k$}-Copy Online Algorithm}

PH3's property of achieving a better competitive ratio for $r_L^*$ being further restricted can be used to create a set of $k \in \mathbb{N}$ algorithms achieving a better competitive ratio.
For this purpose, the interval $[0,1]$ is split into $k$ sub-intervals $I_1, \dots, I_k \subset [0,1]$ with $\cup_{i \in \{1, \dots, k\}} I_i = [0,1]$.
Each interval $I_i$ is covered by one instance of the algorithm PH3 $A_i$, such that $A_i$ achieves a targeted competitive ratio $R \in (3/2, 33/19)$ for $r_L^* \in I_i$.

$R$ is restricted to $(3/2, 33/19)$, because any competitive ratio above or equal to $33/19$ can be achieved with the instance of PH3 shown above, and k-copy PH3 cannot achieve a competitive ratio of $3/2$ or less with finitely many algorithms.

To calculate the number $k$ of algorithms needed to achieve a given competitive ratio $R$, the following iterative approach can be used.

Let $\mathcal{A}$ be a set of algorithms. Initially, $\mathcal A := \emptyset$.
We initialize our iterative approach with $i=0$ and set $r_{max}^0=0$.
Then, while $r_{max}^i < 1$, we increase $i$ by one and we compute three values $r_{min}^{i}$, $r_L^{i}$ and $r_{max}^i$.
With these three values we can define algorithm $A_i$ for which $r_L^i$ denotes the value of $r_L$, $r^i_{min}$ denotes the minimal and $r^i_{max}$ denotes the maximal value for $r_L^*$ for which $A_i$ is still $R$-competitive.
By Lemma~\ref{l:monotony}, $A_i$ will be $R$-competitive for the interval $[r^i_{min}, r^i_{max}]$.
All three values are computed as follows.
We set $r^i_{min}=r^{i-1}_{max}$.
Given $r^i_{min}$, $r_L^i$ can be computed:

If $r^i_{min} \leq 1/3$, 
we have $R=\frac{3}{2} + \frac{9}{2+6 r^i_{min}} (r_L^i - r^i_{min})$.
Solving this equation for $r_L^i$ we get $r_L^i = r^i_{min} + \left(R - \frac{3}{2}\right) \left(\frac{2+6 r^i_{min}}{9}\right)$.
If $r^i_{min} \geq 1/3$, 
we have $R = \frac{3}{2} + \frac{3}{4 r^i_{min}} (r_L^i - r^i_{min})$. 
Solving this equation for $r_L^i$ yields $r_L^i = r^i_{min} + \left(R - \frac{3}{2}\right) \left(\frac{4 r^i_{min}}{3}\right)$.

Having $r_L^i$, we can compute $r_{max}^i$. 
Because the competitive ratio is the minimum of two values, 
we get two candidates $r^i_{max,1}$ and $r^i_{max,2}$ for $r_{max}^i$. 
We can take the maximum of those two candidates, 
i.e., $r_{max}^i = \max( r^i_{max,1}, r^i_{max,2})$, 
because it is sufficient to be $R$-competitive in one case.
In the first case ($\frac {3}{6r_L^*+2} < \frac 1 {4r_L^*}$) we obtain $r_{max, 1}^i = \frac{3 r_L^i - 3 + 2R}{12 -6 R}$ and in the second case we get $r_{max, 2}^i = \frac{r_L^i}{7 -4 R}$.

Now consider the case when $r_{max}^i \geq 1$.  
Because each algorithm $A_\ell$ with $1\leq \ell\leq i$ is $R$-competitive for the interval $[r^\ell_{min},r^\ell_{max}]=[r^{\ell-1}_{max},r^{\ell}_{max}]$ with $r^0_{min} = 0$, 
there is an algorithm $A_m$ for any $r_L^*\in [0,1]$ that is $R$-competitive.
Therefore, we have a $i$-copy online algorithm for bin packing achieving the competitive factor $R$.

\begin{figure}[t]
	\resizebox{\columnwidth}{!}{\input{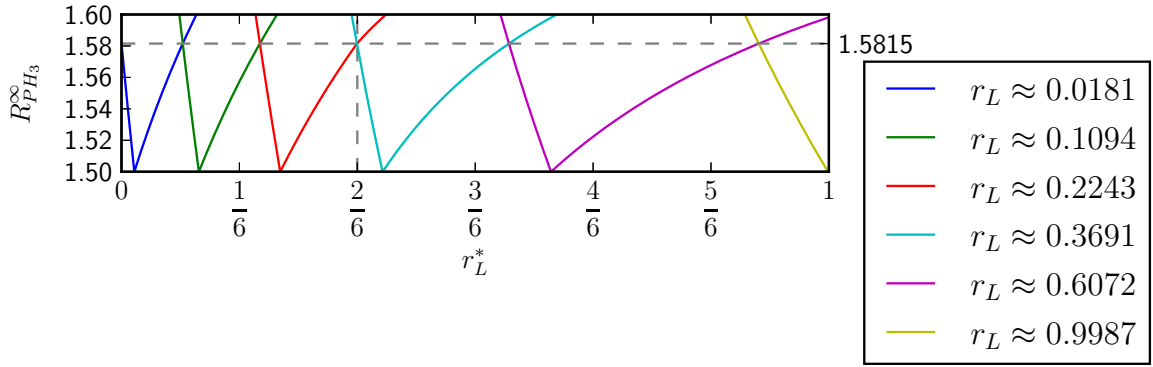}}
	\caption{6-copy PH3 beats the best 1-copy online algorithm known to date, achieving an asymptotic competitive ratio $\Rph < 1.5815$.}
	\label{f:ph3_1.5815}
\end{figure}

Following this method, we see that $k=6$ algorithms are sufficient to guarantee a competitive ratio $R=1.5815$.
This beats the currently best 1-copy online algorithm \textsc{Son Of Harmonic} with a competitive ratio of $1.5816$.
\autoref{f:ph3_1.5815} shows the competitive ratio achieved by the individual algorithms over $r_L^* \in [0,1]$ for $R=1.5815$.
Note that $1.5815$ is not the best competitive ratio achievable by 6-copy PH3, as shown below in Figure~\ref{f:ph3_cmp}.
Using $k=12$ algorithms, a competitive ratio $R=1.5402 < 1.5403$ can be achieved, beating the highest known lower bound for 1-copy online algorithms.

To compute the best competitive ratio achievable by $k \in \mathbb{N}$ algorithms, 
we use binary search on $R$ starting in the interval $[3/2, 33/19]$ 
and test in each iteration if we can guarantee $R$-competitiveness with at most $k$ algorithms.
\autoref{f:ph3_cmp} shows the best competitive ratios achievable by k-copy PH3.

\begin{figure}[h]
	\centering
	\input{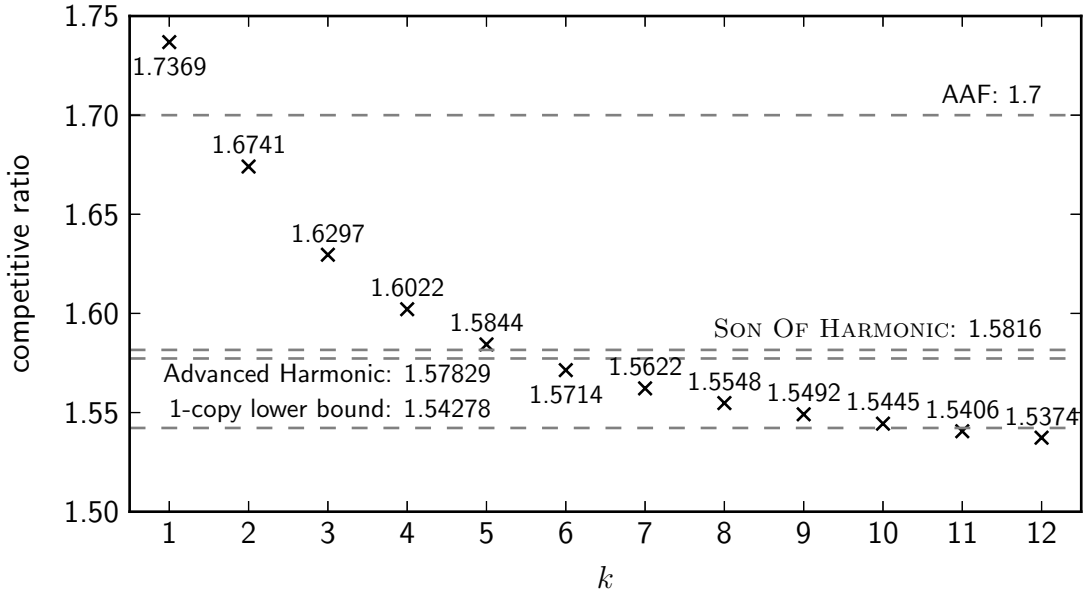}
	\caption{k-copy PH3 performance dependent on $k$.}
	\label{f:ph3_cmp}
\end{figure}

\subsection{Comparison to Related Algorithms}

Because $k$-copy online algorithms can be translated to an online algorithm with advice and vice versa (see Lemma~\ref{l:kcopy_advice}), it seems natural to compare these two variants, even though $k$-copy allows a more precise analysis on the competitive ratio.
In this subsection we compare our algorithm to the best known online algorithm with constant advice, namely \textsc{RedBlue} introduced by Angelopoulos et al. \cite{Angelopoulos2015}.
Their second algorithm is $1.47012$-competitive (and thus beats our algorithm), but the amount of advice needed by this algorithm is too large.
As the focus of k-copy algorithms is to provide good solutions for small k, it is reasonable to  only compare k-copy PH3 to \textsc{RedBlue}.

\begin{table}[t]
	\centering
	\begin{tabular}{|l|l|l|l|}
\hline
Advice in bits & $k$ & $R_{\text{\textsc{RedBlue}}}^\infty$ & $\Rph$ \\
\hline
4 & 16 & 3.3750 & 1.5305 \\
5 & 32 & 2.8258 & 1.5155 \\
6 & 64 & 2.4375 & 1.5078 \\
7 & 128 & 2.1629 & 1.5040 \\
8 & 256 & 1.9688 & 1.5020 \\
9 & 512 & 1.8315 & 1.5010 \\
10 & 1024 & 1.7344 & 1.5005 \\
11 & 2048 & 1.6657 & 1.5003 \\
12 & 4096 & 1.6172 & 1.5002 \\
13 & 8192 & 1.5829 & 1.5001 \\
14 & 16384 & 1.5586 & 1.5001 \\
15 & 32768 & 1.5414 & 1.5001 \\
16 & 65536 & 1.5293 & 1.5001 \\
\hline
\end{tabular}

	\caption{Comparison of the performance of k-copy PH3 and \textsc{RedBlue}.}
	\label{t:cmp_ph3_redblue}
\end{table}

\autoref{t:cmp_ph3_redblue} shows a comparison between \textsc{RedBlue} and $k$-copy PH3 for small amounts of advice.
The competitive ratios given are rounded up to the fourth decimal place.
The competitive ratios for \textsc{RedBlue} are computed using the upper bound on the competitive ratio $1.5 + 15/(2^{\ell/2+1})$.
The competitive ratios for $k$-copy PH3 are calculated using binary search as described above.

\autoref{t:cmp_ph3_redblue} clearly shows the advantage of $k$-copy PH3 over \textsc{RedBlue} for few bits of advice.
With as few as $5$ bits of advice, or $k = 32$, $k$-copy PH3 achieves a better competitive ratio than \textsc{RedBlue} with 16 bits of advice, which corresponds to $k=65536$ algorithms when used as $k$-copy algorithm.

Although \textsc{RedBlue} and $k$-copy PH3 work in a similar way, $k$-copy PH3 achieves a better competitive ratio due to the more precise analysis of the intervals for $r_L^*$, in which each algorithm achieves the competitive ratio.
By avoiding overlaps in these intervals, fewer algorithms are needed.

On the other hand, \textsc{RedBlue} 
simply splits an interval for its parameter $\beta$ evenly into $2^{\ell/2}$ intervals; translated into a $k$-copy setting, this leads
to overlaps in the intervals covered by each algorithm.

\section{Conclusion}
We studied the concept of parallel online algorithms for the Bin Packing Problem.
We developed a $k$-copy online algorithm named PH3 and showed that 
PH3 has an asymptotic competitive ratio of $1.5$ for large $k$; in particular, $k=11$ suffices
to break through the lower bound of a single online algorithm. We also considered
the relationship to online algorithms with advice and achieved a considerable improvement
compared to a previous algorithm.

There are various directions for future work. We saw that
	PH3 is $(1.5+\varepsilon)$-competitive if $\frac {|I_L|}{6\size(I_S)}\leq 1$, i.e., when there is a 
surplus of small items.
	If there are too few small items, PH3 is asymptotically $(1.5+\varepsilon)$-competitive.
	Can we make better use of the second case for an improvement? 
Can we guarantee an absolute competitive ratio of $1.5(+\varepsilon)$?
	
	How does the asymptotic competitive ratio of PH3 depend on $k$?
	It seems to be something like $\frac 3 2 + O\left(\frac {1}{k+\log_2(k+1)}\right)$.
	Translated to an online algorithm with $\ell$ bits of advice, this would yield an asymptotic competitive ratio of 	
	$\frac 3 2 + O\left(\frac 1 {2^\ell + \ell}\right)$.

We also believe that the concept of $k$-copy algorithms is useful
for a wide range of other problems.

\bibliography{bibliography}
\bibliographystyle{abbrv}

\end{document}